\renewcommand{\paragraph}{\roman{paragraph}}
\renewcommand\title[1]{\gdef\@title{\reset@font\Large\bfseries #1}}
\renewcommand\section{\@startsection {section}{1}{\z@}%
	{-3.5ex \@plus -1ex \@minus -.2ex}%
	{2.3ex \@plus.2ex}%
	{\normalfont\large\bfseries}}
\renewcommand\subsection{\@startsection{subsection}{2}{\z@}%
	{-3ex\@plus -1ex \@minus -.2ex}%
	{1.5ex \@plus .2ex}%
	{\normalfont\normalsize\bfseries}}
\renewcommand\subsubsection{\@startsection{subsubsection}{3}{\z@}%
	{-2.5ex\@plus -1ex \@minus -.2ex}%
	{1.5ex \@plus .2ex}%
	{\normalfont\normalsize\bfseries}}
\def\@runningauthor{}\newcommand{\runningauthor}[1]{\def\runningauthor{#1}}
\def\@runningtitle{}\newcommand{\runningtitle}[1]{\def\runningtitle{#1}}
\renewcommand{\ps@plain}{%
	\renewcommand{\@evenhead}{\footnotesize\scshape \hfill\runningauthor\hfill}
	\renewcommand{\@oddhead}{\footnotesize\scshape \hfill\runningtitle\hfill}}
\newcommand{\Z}{\mathbb{Z}}
\newcommand{\F}{\mathbb{F}}
\newcommand {\C}{{\mathcal{C}}}
\newcommand {\ccc}{{\mathbf{c}}}
\g@addto@macro\bfseries{\boldmath}
\theoremstyle{plain}
\newtheorem{theorem}{Theorem}
\newtheorem{lemma}[theorem]{Lemma}
\newtheorem{proposition}[theorem]{Proposition}
\theoremstyle{definition}
\newtheorem{example}[theorem]{Example}
\theoremstyle{remark}
\title{Constructions of block MDS LDPC codes from punctured circulant matrices
}
\runningtitle{Constructions of block MDS LDPC codes from punctured circulant matrices}
\author{Hongwei Zhu \thanks{Hongwei Zhu is with the School of Mathematics and Information
Science, Guangzhou University, Guangzhou, Guangdong 510006, China, and also with the Tsinghua Shenzhen International Graduate School, Tsinghua University, Shenzhen 518055, China. E-mail: zhwgood66@163.com, hongweizhu@sz.tsinghua.edu.cn.}
	\and Xuantai Wu \thanks{Xuantai Wu is with the School of Mathematical Sciences, Naikai University, Tianjin 300071, China. E-mail: jason.wxt@live.com}	
	\and Jingjie Lv \thanks{Jingjie Lv is with the School of Electrical Engineering \& Intelligentization, Dongguan University of Technology, Dongguan 523808, China. E-mail: juxianljj@163.com}
	\and Qinshan Zhang\thanks{Qinshan Zhang is with the Tsinghua Shenzhen International Graduate School, Tsinghua University, Shenzhen 518055, China. E-mail: zhangqs24@mails.tsinghua.edu.cn}
	\and Shu-Tao Xia\thanks{Shu-Tao Xia is with the Tsinghua Shenzhen International Graduate School, Tsinghua University, Shenzhen 518055, China. E-mail: xiast@sz.tsinghua.edu.cn}
}
\runningauthor{}
\date{}
\begin{document}

	\maketitle
	
	\thispagestyle{empty}
	
	\begin{abstract}
		Low density parity check (LDPC) codes, initially discovered by Gallager, exhibit excellent performance in iterative decoding, approaching the Shannon limit. MDS array codes, with favorable algebraic structures, are codes suitable for decoding large burst errors. The Blaum-Roth (BR) code, an MDS array code similar to the Reed-Solomon (RS) code but has a parity-check matrix prone to $4$-cycles. Fossorier proposed constructing quasi-cyclic LDPC codes from circulant permutation matrices but are not MDS array codes. This paper aims to construct codes that possess both the block MDS property and have no $4$-cycles in the Tanner graph of their parity-check matrices, namely the so-called block MDS LDPC codes. Non-binary block MDS QC codes were first constructed by [Tauz {\it et al. }IEEE ITW, 2025] using circulant shift matrices. We first generate a family of block MDS codes over $\F_2$ from punctured circulant permutation matrices.  Second, we construct a family of block MDS LDPC codes from circulant matrices with column weight $> 1$ (CM$(t)$). Additionally, we present the Moore determinant formula for CM$(t)$s and a sufficient condition to avoid $4$-cycles in CM\((t)\)-QC LDPC codes' Tanner graphs for $t> 1$. We also point out the non-existence of binary block MDS CPM-QC LDPC codes. Compared to the codes constructed in [Li {\it et al. }IEEE TIT, 2023] and [Xiao {\it et al. }IEEE TCOM, 2021], our block MDS LDPC codes show enhanced random-error-correction at a similar code length and rate. Meanwhile, these codes can effectively combat burst errors when considered as array codes. Both of our two types of constructions for block MDS LDPC codes are applicable to the scenario of the binary field.
	\end{abstract}
	{\bf Keywords:} LDPC codes, block MDS codes, punctured circulant matrices, Vandermonde matrix, burst error.\\
	{\bf MSC(2010):} 94 B15, 94 B25, 05 E30
	
	\section{Introduction}
	Low density parity check (LDPC) codes, originally discovered by Gallager \cite{Gall1}, showcase remarkable performance when iterative decoding is employed. Their performance approaches the Shannon limit extremely closely \cite{MacK,Richardson}. In most cases, LDPC codes are obtained through computer-based searches. This is due to the fact that there are only a limited number of algebraic construction methods available for them. A key aspect of LDPC code construction lies in the utilization of structural construction methods rather than random construction methods. Tanner's graphical representation of LDPC codes, as presented in \cite{TAN0}, has exerted a profound influence on a substantial portion of the contemporary literature in this research domain. Structural construction encompasses graph-theoretic construction approaches, such as the significant works of Margulis \cite{MAR1} and Tanner \cite{TAN1}. Subsequently, Rosenthal and Vontobel \cite{RV1} constructed LDPC codes from Ramanujan graphs. Kou, Lin, and Fossorier \cite{KLF1} constructed LDPC codes based on finite geometry. These structural construction methods can guarantee distance properties, enhance cycle characteristics, and simplify implementation. Array codes \cite{Fan1,Fan2} can be considered as LDPC codes with excellent algebraic structures, and they are also applied in scenarios of decoding codes with large burst errors.
	
	The Blaum-Roth (BR) code \cite{BR1}, being one of the most renowned MDS array codes, has a structure quite similar to that of the Reed-Solomon (RS) code \cite{RS1}. The distinction is that its symbols are in the Galois ring rather than the Galois field. Owing to the complexity of operations in Galois fields, the symbols are typically very small ($8$ to $16$ bits). However, the symbols of array codes can be much larger (on the order of hundreds of bits) because the decoding complexity remains reasonable. An easily conceivable attempt is to directly utilize the BR code as an LDPC code.
	A $q$-ary BR code with sub-packetization level $p-1$ can be regarded as an RS code over the residue ring $\mathbb{F}_q[x]/\langle 1 + x+\cdots + x^{p - 1}\rangle$, where $p$ is a odd prime whcih is not the characteristic of $\F_q$. Its parity-check matrix can be represented as
	{\small\[
		\mathcal{H}=\begin{pmatrix}
			1&1&1&\cdots&1\\
			1&\alpha&\alpha^{2}&\cdots&\alpha^{p - 1}\\
			\vdots&\vdots&\vdots&\ddots&\vdots\\
			1&\alpha^{r - 1}&\alpha^{2(r - 1)}&\cdots&\alpha^{(p - 1)(r - 1)}
		\end{pmatrix}_{r\times n},
		\]}
	where $\alpha$ denote the monomial $x$ in the residue ring $\mathbb{F}_q[x]/\langle 1 + x+\cdots + x^{p - 1}\rangle$. In the vector space $\mathbb{F}_{q}^{(p - 1)\times(p - 1)}$, $\alpha$ corresponds to the following $(p - 1)\times(p - 1)$ matrix:
	{\small\[
		\alpha=\begin{pmatrix}
			0&0&\cdots&0&1\\
			1&0&\cdots&0&1\\
			\vdots&\vdots&\vdots&\vdots&\vdots\\
			0&0&\cdots&1&1
		\end{pmatrix}_{(p-1)\times (p-1)}.
		\]}
	Then $4$-cycles will emerge in the parity-check matrix of the BR code, so the BR code is not suitable to be used as an LDPC code.
	
	Fossorier \cite{FOS1} proposed the well-known construction of binary quasi-cyclic LDPC codes from circulant permutation matrices (CPM-QC LDPC codes) and presented some sufficient and necessary conditions for the Tanner graph of these QC LDPC codes to have a given girth \(g\) \cite[Theorem 2.1]{FOS1}, as well as sufficient and necessary conditions for the girth to be no less than $6$. The parity-check matrix considered by Fossorier has the form:
	{\small\begin{equation}\label{H1}
			H=\begin{pmatrix}
				I(0) & I(0) & \cdots & I(0) \\
				I(0) & I(p_{1,1}) & \cdots & I(p_{1,L - 1}) \\
				\vdots & \vdots & \ddots & \vdots \\
				I(0) & I(p_{J - 1,1}) & \cdots & I(p_{J - 1,L - 1})
			\end{pmatrix},
	\end{equation}}
	where for \(1\leq j\leq J - 1\), \(1\leq l\leq L - 1\), \(I(p_{j,l})\) represents the \(p\times p\) CPM with a one at column \(r + p_{j,l}\text{ mod }p\) for row \(r\), \(0\leq r\leq p - 1\), and zero elsewhere. Fossorier pointed out that the rank of \(H\) in (\ref{H1}) is at most \(Jp - J+1\).
	Let the code with \(H\) as in (\ref{H1}) serving as the parity-check matrix be \(\mathcal{C}_1\). If we consider \(\mathcal{C}_1\) as an array code with a sub-packetization level of \(p\), then \(\mathcal{C}_1\) definitely cannot be an MDS array code. Consequently, the burst-error-correction ability of \(\mathcal{C}_1\) is not in an optimal state. Although \(\mathcal{C}_1\) is not an MDS array code, scholars still attempt to construct LDPC codes that can correct both random errors and burst errors \cite{Li1,Xiao1}. By imposing the modular Golomb ruler condition, Xiao {\it et al.} \cite{Xiao1} studied the burst-error-correction capability of CPM-QC-LDPC codes when the column weight is 2 or more.
	
	Does an MDS array code exist whose parity-check matrix has no $4$-cycles? If such a code exists, then this code may be capable of correcting both random errors and burst errors. When viewed from the perspective of a binary error-correcting code, an array code can possess an LDPC matrix, which can be utilized in the message-passing algorithm. That is to say, it can be applied in soft iterative decoding schemes. For example, it could potentially be used as the outer code in a concatenated coding scheme, correcting the output of a turbo of an LDPC decoder, serving as an alternative to the hard-decision decoding of RS codes. We term such a code a block MDS LDPC code. Tauz {\it et al.} \cite{Tau1} demonstrated a method of using sampling for relaxation in the information reconciliation (IR) step of quantum key distribution (QKD), thus improving the success rate of the IR step. This enables us to create LDPC codes in the form of block MDS QC-LDPC codes that take privacy amplification (PA) into account to utilize this relaxation. Specifically, they proposed first using the generalized Schur's formula \cite{Kovacs1,Silvester} to select circulant shift matrices quasi-cyclic (CSM-QC) codes with the MDS array property over non-binary fields, and then imposing the necessary and sufficient conditions given by Fossorier \cite[Theorem 2.1]{FOS1} on these codes to make the codes free of $4$-cycles.
	\subsection*{Our Contributions}
	In this paper, we construct block MDS LDPC codes from punctured circulant matrices. The specific contributions are as follows:
	\begin{itemize}
		\item [{\rm (1)}]We constructed a family of block MDS LDPC codes from punctured circulant permutation matrices (see Theorem \ref{thmfirst}). This construction is applicable to any finite field, including the binary field.  We pointed out that there is no binary block MDS CPM-QC code without $4$-cycles. The construction of non-binary block MDS LDPC codes proposed by Tauz {\it et al.} \cite[Theorem 1]{Tau1} has been refined into an algebraic construction with a Vandermonde form (See Theorem \ref{Taurefine}). This construction is easy to implement and does not rely on the cumbersome condition of judging the generalized Schur's formula (Lemma \ref{BlockVan}).
		
		\item [{\rm (2)}]We constructed a family of block MDS LDPC codes from circulant matrices with column weight greater than $1$ (see Theorem \ref{thmsec}). This construction is applicable to finite fields of characteristic $2$. From the results of the simulation experiments, under the condition of the similar code length and code rate, the block MDS LDPC codes we constructed have better performance in correcting random errors compared with the works in \cite{Li1, Xiao1} (see Example \ref{ex17} and Example \ref{ex18}). Specifically, the formula for calculating the Moore determinant of CM$(t)$s is presented here (see Lemma \ref{CMsMoo}). A sufficient condition for avoiding $4$-cycles in the Tanner graph of CM($t$)-QC LDPC codes is given (see Lemma \ref{suff1}).
	\end{itemize}
	\subsection*{Organization}
	The structure of the remaining paper is as follows: Section II lays out essential preliminaries and notation. Section III demonstrates how to construct binary block MDS LDPC codes from punctured CPM. Section IV shows how to construct binary block MDS LDPC codes from punctured circulant matrices with column weight greater than $1$.  Finally, Section V offers concluding remarks.
	\section{Preliminaries}
	Throughout this paper, we assume  and fix the following:
	\begin{itemize}
		\item Let $\F_q$ denote the finite field with $q$ elements, and let $\F_q^*=\F_q\backslash\{0\}$.
		\item Let $\Z_n$ be the ring of integers modulo $n$, i.e., $\Z_n=\{0,1,\ldots,n-1\}.$ Let $[n]=\{1,2,\ldots,n\}.$
		\item Let ${A \choose i}$ denote the set of all $i$-subsets of the set $A$.
		\item Let $(a)_N$ denote the integer $a {\rm {~mod~}}N$.
		\item Let $\Sigma^m$ be the vector space of all $m$-tuples over the alphabet $\Sigma$.
		\item Assume that $N$ is a positive integer meeting $\gcd(N,q)=1$.
	\end{itemize}
	
	\subsection{Array codes}
	An $(n, M;N)$ array code over $\Sigma$ is a subset of $|\Sigma|^{N\times n}$ with size $M$, where $N$ is called the sub-packetization level. We refer to $\C$ as a $q$-ary array code $\C$ with parameters $(n,M;N)$ if $\Sigma=\F_q$.

	For any array $\Gamma=[c_{i,j}]_{\tiny{\substack{1\leq i\leq N\\
				1\leq j\leq n}}}$, the array weight of $\Gamma$, denoted by $w_A(\Gamma)$, is defined as follows:
	$$w_A(\Gamma)=\#\{1\leq j\leq n| (c_{1,j},c_{2,j},\ldots, c_{N,j})\neq \mathbf{0}\}.$$

	For an array code $\C$ over $\Sigma$, the minimum array distance of $\C$ is defined as
	$$d_A(\C)=\min\left\{d_A(\ccc_1,\ccc_2)\left|\{\ccc_1,\ccc_2\} \in {\C\choose 2}\right.\right\}.$$
	The following result which from \cite{BR1} presents the Singleton bound for array codes.
	\begin{theorem}\cite{BR1}\label{Singletonbound}
		If $\C$ is an array code $\C$ with parameters $(n,M;N)$ over $\F_q$, then
		\begin{equation}\label{Singleton2}
			\log_{q}M\leq N(n+1-d_A(\C)).
		\end{equation}
	\end{theorem}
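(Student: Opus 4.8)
The plan is to imitate the classical proof of the Singleton bound, but working column-by-column (i.e., on the $n$ array coordinates, each an element of $\F_q^N$) rather than symbol-by-symbol. We may assume $M\geq 2$, since otherwise $\log_q M=0$ and the inequality is trivial; then $d:=d_A(\C)$ is well defined, and because two distinct arrays in $\F_q^{N\times n}$ can disagree in at most $n$ of their $n$ columns we have $1\leq d\leq n$, so that $k:=n+1-d$ is a positive integer.

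First I would fix an arbitrary $S\in{[n]\choose k}$ and consider the puncturing map $\pi_S\colon\F_q^{N\times n}\to\F_q^{N\times k}$ that deletes from an array every column whose index lies outside $S$. The key step is to check that the restriction of $\pi_S$ to $\C$ is injective. Suppose $\ccc_1,\ccc_2\in\C$ are distinct but $\pi_S(\ccc_1)=\pi_S(\ccc_2)$; then $\ccc_1$ and $\ccc_2$ have equal columns at every index in $S$, so the set of column indices where they differ is contained in $[n]\setminus S$, which has cardinality $n-k=d-1$. Hence $d_A(\ccc_1,\ccc_2)\leq d-1<d_A(\C)$, contradicting the minimality in the definition of $d_A(\C)$. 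Therefore $\pi_S$ is injective on $\C$.

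Consequently $M=|\C|=|\pi_S(\C)|\leq|\F_q^{N\times k}|=q^{Nk}=q^{N(n+1-d)}$, and taking $\log_q$ of both sides gives $\log_q M\leq N(n+1-d_A(\C))$, which is precisely (\ref{Singleton2}).

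I do not expect a genuine obstacle here: the only points requiring care are the degenerate cases ($M=1$, or $d=1$ where the bound is vacuous) and the elementary count that a block of $k$ columns over $\F_q$ at sub-packetization level $N$ admits exactly $q^{Nk}$ possibilities. Note that the argument uses only the combinatorial definitions of $w_A$ and $d_A$, and in particular no linearity of $\C$ is needed.
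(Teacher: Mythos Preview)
Your argument is correct. The paper does not supply its own proof of this bound---it simply quotes the result from Blaum--Roth \cite{BR1}---so there is nothing substantive to compare against beyond noting that your puncture-then-count argument is exactly the classical Singleton proof specialized to the column alphabet $\F_q^{N}$, which is also how the original reference handles it.
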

	An array code $\C$ satisfying the Singleton bound given by (\ref{Singleton2}) is defined as an MDS array (or block MDS) code with sub-packetization level $N$ over $\F_q$.
	
	Let $(C_j)_{1 \leq j \leq n}$ denote a codeword of an array code $\C$ with sub-packetization level $N$, where $C_j$ corresponds to column vectors in $\F_q^N$ for $1 \leq j \leq n$. These $C_j$ are designated as the nodes of $\C$. The array code $\C$ with sub-packetization level $N$ is expressed in the following parity-check formulation:
	\begin{equation}\label{arrayparity-checkform}
		\C=\left\{\left(C_j\right)_{1\leq j\leq n}\left| \sum_{j=1}^nA_{i,j}C_j=\mathbf{0} {\rm ~for~}1\leq i\leq r\right.\right\},
	\end{equation}
	where $\mathbf{0}\in \F_q^N$ and each $A_{i,j}$ is an $N\times N$ square matrix. For convenience, we set
	\begin{equation}\label{H2}
		H=\left(
		\begin{array}{cccc}
			A_{1,1} & A_{1,2} & \cdots & A_{1,n} \\
			A_{2,1} & A_{2,2} & \cdots & A_{2,n} \\
			\vdots & \vdots & \ddots & \vdots \\
			A_{r,1} & A_{r,2} & \cdots & A_{r,n} \\
		\end{array}
		\right)=(H_1,H_2,\ldots,H_{n}),
	\end{equation}
	where each $H_i$ is an $N(n-k)\times N$ matrix. The matrix $H$ is referred to as the parity-check matrix of the array code $\C$ with sub-packetization level $N$.
	
	For $N=1$, an $(n-k)\times n$ matrix $H$ over $\F_q$ is an MDS matrix if and only if every square submatrix of $H$ is nonsingular. The following proposition presents a property of the parity-check matrices of MDS array codes with sub-packetization level $N$.
	\begin{proposition}\cite{BR1}\label{proiff}
		Let $\C^{\prime}$ be an array code over $\F_q$ with parity-check matrix {\rm(\ref{H2})}. $\C^{\prime}$ is an MDS array code over $\F_q$ if and only if the $N(n-k)$ columns of any $n-k$ matrices $H_i$ form a linearly independent set over $\F_q$.
	\end{proposition}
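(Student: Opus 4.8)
The plan is to translate the combinatorial MDS condition into a statement about singularity of the square block-submatrices of $H$, using the standard dictionary between minimum distance and linear dependence of parity-check columns, adapted to the array setting. Two preliminary observations set this up. First, since the constraints in (\ref{arrayparity-checkform}) are $\F_q$-linear, $\C^{\prime}$ is an $\F_q$-subspace of $\F_q^{Nn}$, so $d_A(\C^{\prime})=\min\{w_A(\ccc):\ccc\in\C^{\prime}\setminus\{\0\}\}$. Second, writing a codeword as $\ccc=(C_1,\dots,C_n)$ with $C_j\in\F_q^N$, membership in $\C^{\prime}$ is exactly $\sum_{j=1}^n H_jC_j=\0$; and for $S\subseteq[n]$ with $|S|=n-k$, the block matrix $[H_j]_{j\in S}$ is square of size $N(n-k)\times N(n-k)$, because each $H_j$ has $N$ columns and $H$ has $N(n-k)$ rows. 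A nonzero codeword supported inside $S$ is precisely a nontrivial $\F_q$-linear dependence among the columns of $[H_j]_{j\in S}$.

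For the ``if'' direction, assume every choice of $n-k$ blocks $H_i$ has linearly independent columns. Applying this to one such choice shows $\operatorname{rank}_{\F_q}H=N(n-k)$, hence $\dim_{\F_q}\C^{\prime}=Nk$ and $\log_q M=Nk$; by Theorem \ref{Singletonbound} this forces $d_A(\C^{\prime})\le n-k+1$. If $d_A(\C^{\prime})\le n-k$, pick a nonzero $\ccc\in\C^{\prime}$ with $w_A(\ccc)\le n-k$ and enlarge its set of nonzero nodes to a set $S$ of size exactly $n-k$; then $\sum_{j\in S}H_jC_j=\0$ with the $C_j$ not all zero, contradicting independence of the columns of $[H_j]_{j\in S}$. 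Hence $d_A(\C^{\prime})=n-k+1$, i.e. $\C^{\prime}$ is MDS.

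For the ``only if'' direction, assume $\C^{\prime}$ is MDS, so $d_A(\C^{\prime})=n-k+1$; tightness of the Singleton bound then gives $\log_q M=Nk$, whence $\operatorname{rank}_{\F_q}H=N(n-k)$. Fix $S\subseteq[n]$ with $|S|=n-k$. If the square matrix $[H_j]_{j\in S}$ were singular, there would exist $C_j\in\F_q^N$ ($j\in S$), not all zero, with $\sum_{j\in S}H_jC_j=\0$; extending by $C_j=\0$ for $j\notin S$ produces a nonzero codeword of array weight at most $n-k<d_A(\C^{\prime})$, a contradiction. Thus $[H_j]_{j\in S}$ is nonsingular, i.e. its $N(n-k)$ columns are $\F_q$-linearly independent, as claimed.

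The only points needing care — rather than a genuine obstacle — are the bookkeeping items: confirming that the number of parity blocks equals $n-k$ so that the relevant submatrices are square, and justifying $\log_q M=Nk$ in each direction (from the rank hypothesis in one case, from tightness of Theorem \ref{Singletonbound} in the other), which is precisely where the respective hypotheses are used. No idea beyond the distance/dependence correspondence is required, and the argument is the array-code analogue of the classical characterization recovered when $N=1$.
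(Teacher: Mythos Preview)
The paper does not prove this proposition; it is quoted from \cite{BR1} without argument, so there is nothing in the paper to compare against. Your approach---the parity-check dictionary between array weight and block-column dependence, together with the rank/dimension bookkeeping---is the standard one and is correct.

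One caveat in the ``only if'' paragraph: you write ``assume $\C'$ is MDS, so $d_A(\C')=n-k+1$; tightness of the Singleton bound then gives $\log_q M=Nk$.'' This is mildly circular, since Singleton-tightness alone fixes only the relation $\log_q M=N(n+1-d_A)$, not either quantity separately. If $H$ is allowed to be row-rank-deficient the same $H$ can define an MDS code of dimension strictly larger than $Nk$: e.g.\ with $N=2$, $n=3$, $r=2$ and every block equal to $I_2$, the null space $\{(C_1,C_2,C_3):C_1+C_2+C_3=\0\}$ is a $(3,q^{4},2;2)$ MDS array code, yet no pair of block columns of $H$ is independent. The proposition is intended---and is stated explicitly with parameters $(n,q^{Nk},n-k+1;N)$ in the paper's very next paragraph---under the reading that both $d_A=n-k+1$ and $\log_q M=Nk$ are part of the MDS hypothesis; with that reading your argument is complete as written.
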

	\subsection{Cyclic codes and quasi-cyclic codes}
	A linear $[n,k,d]$ code $\mathcal{C}$ over $\mathbb{F}_q$ is defined as a $k$-dimensional subspace of $\mathbb{F}_q^n$ with a minimum distance of $d$.
	A highly significant class of linear codes is the class of cyclic codes. These codes are invariant under cyclic shifts, which are represented by the following transformation:
	\begin{eqnarray*}
		\mathcal{T}:\mathbb{F}_q^n &\longrightarrow& \mathbb{F}_q^n \\
		\mathbf{c}=(c_0,c_1,\ldots,c_{n - 1}) &\longmapsto& \mathcal{T}(\mathbf{c})=(c_{n - 1},c_0,\ldots,c_{n - 2})
	\end{eqnarray*}
	where the transformation $\mathcal{T}$ acts on the coordinates of the vectors in $\mathbb{F}_q^n$. By establishing a correspondence between any vector $(c_0,c_1,\ldots,c_{n - 1})\in\mathbb{F}_q^n$ and the polynomial $c_0 + c_1x+\cdots+c_{n - 1}x^{n - 1}\in\frac{\mathbb{F}_q[x]}{(x^n - 1)}$, any code $\mathcal{C}$ of length $n$ over $\mathbb{F}_q$ corresponds to a subset of the principal quotient ring $\frac{\mathbb{F}_q[x]}{(x^n - 1)}$.
	
	Let $\mathcal{C}=\langle g(x)\rangle$ be a cyclic code, where $g(x)$ is a monic polynomial of the smallest degree among all the generators of $\mathcal{C}$. Then $g(x)$ is unique and is referred to as the generator polynomial of $\mathcal{C}$, and $h(x)=\frac{x^n - 1}{g(x)}$ is denoted as the parity-check polynomial of $\mathcal{C}$.
	
	Quasi-cyclic (QC) codes with index $n_1$ represent a generalization of cyclic codes. For any codeword $\mathbf{c}=(c_0,c_1\ldots,c_{n_1N - 1})\in\mathcal{C}$, if there exists a smallest positive integer $n_1$ such that $\mathcal{T}^{n_1}(\mathbf{c})\in\mathcal{C}$, then the linear code $\mathcal{C}$ is called a quasi-cyclic code with index $n_1$. The QC code $\mathcal{C}$ can be regarded as an $\frac{\mathbb{F}_q[x]}{(x^N - 1)}$-submodule of $\left(\frac{\mathbb{F}_q[x]}{(x^{N}-1)}\right)^{n_1}$.
	
	We define an $\mathbb{F}_q$-module isomorphism $\phi$ from $\mathbb{F}_{q}^{n_1N}$ to $\left(\frac{\mathbb{F}_q[x]}{(x^N - 1)}\right)^{n_1}$ as follows:
	\begin{equation*}
		\phi(c_0,\ldots,c_{N - 1},c_{N},\ldots,c_{2N - 1},\ldots,c_{(n_1 - 1)N},\ldots, c_{n_1N - 1})=(c_0(x),c_1(x),\ldots,c_{n_1 - 1}(x)),
	\end{equation*}
	where $c_i(x)=\sum_{t = iN}^{(i + 1)N - 1}c_t x^{t}\in\frac{\mathbb{F}_q[x]}{(x^N - 1)}$ for $i\in\mathbb{Z}_{n_1}$. If $\mathcal{C}$ is generated by $s$ elements $\mathbf{a}_0,\mathbf{a}_1,\ldots,\mathbf{a}_{s - 1}\in\left(\frac{\mathbb{F}_q[x]}{(x^N - 1)}\right)^{n_1}$, where $\mathbf{a}_0=(a_{0,1}(x),\ldots,a_{0,n_1}(x)), \ldots,\mathbf{a}_{s - 1}=(a_{s - 1,1}(x),\ldots,\\ a_{s - 1,n_1}(x)),$ then $\mathcal{C}$ is an $s$-generator QC code with index $n_1$.
	
	
	\subsection{Block matrices, circluant matrices and punctured circulant matrices}
	For an array code $\C$ with the parity-check matrix $H$ of form (\ref{H2}), $\C$ is an MDS array code with parameters $(n,q^{Nk},d_A(\C)=n-k+1;N)$ over $\F_q$ if and only if every $r\times r$ block submatrix
	\begin{equation*}\label{commute1}
		M_{(t_1,t_2,\ldots,t_r)}=\left(
		\begin{array}{cccc}
			A_{1,t_1} & A_{1,t_2} & \cdots & A_{1,t_r} \\
			A_{2,t_1} & A_{2,t_2} & \cdots & A_{2,t_r} \\
			\vdots & \vdots & \ddots & \vdots \\
			A_{r,t_1} & A_{r,t_2} & \cdots & A_{r,t_r} \\
		\end{array}
		\right)
	\end{equation*}
	is invertible, where $\{t_1,t_2,\ldots,t_r\}\in {[n]\choose r}$.
	If the blocks $A_{i,t_j}$ of the block matrix $M_{(t_1,t_2,\ldots,t_r)}$ are pairwise commute, then we have the following method to determine whether the block matrix is invertible.
	
	\begin{lemma}\label{BlockVan}(Generalized Schur's formula)\cite{Kovacs1,Silvester}
		Assume that all blocks of $M_{(t_1,t_2,\ldots,t_r)}$ are pairwise commutative. Then the determinant of $M_{(t_1,t_2,\ldots,t_r)}$ is
		\begin{equation*}
			\left|\sum_{\pi\in {\rm Sym}_r}{\rm sgn} (\pi)A_{1\pi(t_1)}A_{2\pi(t_2)}\cdots A_{r\pi(t_r)}\right|,
		\end{equation*}
		where ${\rm Sym}_r$ is the set of all permutations of the set of $r$ coordinates and ${\rm sgn} (\pi)$ equals $+1$ if $\pi$ is an even permutation and equals $-1$ if $\pi$ is an odd permutation.
	\end{lemma}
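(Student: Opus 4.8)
The plan is to reduce the identity to the simultaneously-triangular case, as in the arguments of \cite{Kovacs1,Silvester}. Write $B_{i,j}:=A_{i,t_j}$ for $1\le i,j\le r$, so that $M_{(t_1,\ldots,t_r)}=(B_{i,j})_{1\le i,j\le r}$ and the matrix appearing in the statement is
\[
D:=\sum_{\pi\in{\rm Sym}_r}{\rm sgn}(\pi)\,B_{1,\pi(1)}B_{2,\pi(2)}\cdots B_{r,\pi(r)} .
\]
Since the $B_{i,j}$ pairwise commute, each monomial above can be reordered at will, so $D$ is a well-defined $N\times N$ matrix over $\F_q$, and the assertion to be proved is exactly $\det M_{(t_1,\ldots,t_r)}=\det D$. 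The first move would be to pass to the algebraic closure $\overline{\F_q}$: neither determinant changes under this extension, and the $B_{i,j}$ now form a finite commuting family of operators on $\overline{\F_q}^{\,N}$.

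The main structural input is the classical fact that a finite commuting family of operators over an algebraically closed field is simultaneously triangularizable --- one extracts a common eigenvector (the eigenspace of any single member is invariant under all the others, so one inducts on the dimension via a quotient). Hence there is $P\in{\rm GL}_N(\overline{\F_q})$ such that every $P^{-1}B_{i,j}P$ is upper triangular; write $\lambda^{(a)}_{i,j}$ for its $a$-th diagonal entry, $1\le a\le N$. Replacing each block $B_{i,j}$ by $P^{-1}B_{i,j}P$ conjugates $M_{(t_1,\ldots,t_r)}$ by ${\rm diag}(P,\ldots,P)$ and $D$ by $P$, so neither determinant is affected, and one may assume from the outset that every $B_{i,j}$ is upper triangular.

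With the $B_{i,j}$ upper triangular, I would reorder the $rN$ rows of $M_{(t_1,\ldots,t_r)}$ so that they run first over the within-block position $a$ and then over the block-row index $i$, and reorder the columns by the same rule. This is conjugation by a permutation matrix, hence leaves $\det M_{(t_1,\ldots,t_r)}$ unchanged, and it recasts the matrix as an $N\times N$ array of $r\times r$ blocks whose $(a,b)$-block is $\big[(B_{i,j})_{a,b}\big]_{1\le i,j\le r}$. Since each $B_{i,j}$ is upper triangular, this block vanishes whenever $a>b$, so the reordered matrix is block upper triangular with diagonal blocks $\Lambda^{(a)}:=\big[\lambda^{(a)}_{i,j}\big]_{1\le i,j\le r}$; thus $\det M_{(t_1,\ldots,t_r)}=\prod_{a=1}^{N}\det\Lambda^{(a)}$. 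On the other hand $D$, being a sum of products of upper triangular matrices, is itself upper triangular, and its $a$-th diagonal entry equals $\sum_{\pi\in{\rm Sym}_r}{\rm sgn}(\pi)\prod_{i=1}^{r}\lambda^{(a)}_{i,\pi(i)}=\det\Lambda^{(a)}$; hence $\det D=\prod_{a=1}^{N}\det\Lambda^{(a)}=\det M_{(t_1,\ldots,t_r)}$, which is the claim.

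I expect the only genuinely non-formal step to be the simultaneous triangularization of the commuting family, which is exactly what forces the (harmless) detour through $\overline{\F_q}$; everything afterwards --- the two matching row/column permutations that exhibit the block-triangular shape, and the remark that the formal determinant $D$ is upper triangular with the correct diagonal --- is bookkeeping. Since the statement is quoted from \cite{Kovacs1,Silvester}, one could also simply cite it, but the argument above is short enough to reproduce in full.
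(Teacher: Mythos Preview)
Your argument is correct: the passage to $\overline{\F_q}$, simultaneous triangularization of the commuting family, the row/column permutation that converts the $r\times r$ array of upper-triangular $N\times N$ blocks into an $N\times N$ array of $r\times r$ blocks (block upper triangular with diagonal blocks $\Lambda^{(a)}$), and the matching computation of the diagonal of $D$ all go through as you describe.

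As for comparison with the paper: the paper does not give its own proof of this lemma. It is stated as a known result with citations to \cite{Kovacs1,Silvester} and used as a black box (principally to motivate the difficulty of checking the MDS-array property directly, which is then sidestepped by the Vandermonde/Moore constructions of Theorems~\ref{thmfirst}, \ref{Taurefine}, and \ref{MimiLv1}). So your write-up actually supplies more than the paper does here; if the goal is only to match the paper, a one-line citation would suffice, but your self-contained argument is sound and compact enough to include if a proof is desired.
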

	
	A square matrix \(A\) over $\F_q$ is referred to as a circulant matrix with column weight \(t\) when it assumes the following form:
	\begin{equation}\label{CM1}
		A=\begin{pmatrix}
			a_0 & a_1 & \cdots & a_{N - 1} \\
			a_{N - 1} & a_{0} & \cdots & a_{N - 2} \\
			\vdots & \vdots  & \ddots & \vdots \\
			a_{1} & a_{2} & \cdots & a_{0} \\
		\end{pmatrix},
	\end{equation}
	where precisely \(t\) of the coefficients \(a_0,a_1,\ldots,a_{N - 1}\) are non-zero elements in $\F_q$.
	The matrix \(A\) can be represented as \(A = a_0I+a_{N-1}P_N+\cdots+a_{1}P_N^{N - 1}\), where \(I\) denotes the \(N\)-order identity matrix and \(P_N\) is the \(N\times N\) circulant permutation matrix having a one at column \((s + 1)\bmod N\) for row \(s\), with \(s\in[N]\).
	The matrix \(A\) can be put into a one-to-one correspondence with the polynomial \(a(x)=a_0 + a_{N-1}x+\cdots+a_{1}x^{N - 1}\in\frac{\mathbb{F}_q[x]}{(x^N - 1)}\). We term the polynomial \(a(x)\) as the associated polynomial of the circulant matrix \(A\). For convenience, we abbreviate matrix of the form (\ref{CM1}) as CM$(t)$; the matrix of the form $P_N^i$ is abbreviated as CPM; and  the matrix of the form $s_i\cdot P_N^i$ is abbreviated as circulant shift matrix (CSM), where $s_i\in\mathbb{F}_q^*$.
	
	Let \({\rm Pu}(A,\tau)\) denote the \((N-\tau)\times(N-\tau)\) matrix obtained by puncturing the last \(\tau\) rows and the last \(\tau\) columns of a circulant matrix \(A\). Specifically,
	\begin{equation*}
		{\rm Pu}(A,\tau)=\begin{pmatrix}
			a_0 & a_1 & \cdots & a_{N-\tau - 1}  \\
			a_{N - 1} & a_0 & \cdots & a_{N-\tau - 2}  \\
			\vdots & \vdots & \ddots & \vdots \\
			a_{\tau+1} & a_{\tau+2} & \cdots & a_{0}  \\
		\end{pmatrix},
	\end{equation*}
	where \(1\leq\tau<N\). The square matrix \({\rm Pu}(A,\tau)\) is called a punctured circulant matrix (PUCM). In particular, if \(A\) is a CPM, then \({\rm Pu}(A,\tau)\) is called a punctured circulant permutation matrix (PUCPM).
	
	
	\subsection{LDPC codes}
	An LDPC code $\C$ over $\F_q$ is defined by a sparse parity-check matrix
	\begin{equation}\label{HH1}
		H=\left(h_{i,j}\right)_{\tiny{\substack{1\leq i\leq m_1\\
					1\leq j\leq m_2}}}\in \F_q^{m_1\times m_2}.
	\end{equation}
	The Tanner graph $G$ of $\C$ is a bipartite graph with two sets of nodes: the variable nodes (VNs) and
	check nodes (CNs). The VNs and CNs, denoted by $v_1,v_2,\ldots,v_{m_2}$ and $c_1,c_2,\ldots,c_{m_1}$, respectively. The VNs and CNs correspond to the $m_2$ columns and the $m_1$ rows of $H$. A VN $v_j$ is connected to a CN $c_i$ by an edge if and only if the $h_{i,j}=1$. The degree $d_{v_j}$ of the VN $v_j$ ({\it resp.} $d_{c_i}$ of CNs) is defined as the number of CNs connected to $v_j$ ({\it resp.} the number of VNs connected to $c_i$). A cycle of length $2i$ in $H$ shown in (\ref{HH1}) is defined by $2i$ positions $h_{i,j}=1$ such that:
	\begin{itemize}
		\item [{\rm (1)}]two consecutive positions are obtained by changing alternatively of row or column only;
		\item [{\rm (2)}]all positions are distinct, expect the first and last ones.
	\end{itemize}
	The girth which is the minimum length of cycles in their Tanner graph.
	Since short cycles and the minimum distance directly affect the performance of LDPC codes, when designing LDPC codes, we should try to avoid the occurrence of $4$-cycles and ensure that the minimum distance is not too small.
	
	If the parity-check matrix $H$ can be written as
	\begin{equation}\label{LDPCparitycheck}
		H=\left(
		\begin{array}{cccc}
			A_{1,1} & A_{1,2} & \cdots & A_{1,n} \\
			A_{2,1} & A_{2,2} & \cdots & A_{2,n} \\
			\vdots & \vdots & \ddots & \vdots \\
			A_{r,1} & A_{r,2} & \cdots & A_{r,n} \\
		\end{array}
		\right),
	\end{equation}
	where $A_{i,j}$ is an $N\times N$ CM for $i\in[r]$ and $j\in [n]$, $m_1=Nr$, and $m_2=Nn$, then the corresponding code is a QC LDPC code.
	One of the main reasons why QC LDPC codes are attractive is that they can be efficiently encoded using methods such as those presented in \cite{Myung1,LiZ1}, and can be efficiently decoded using decoding algorithms based on belief propagation \cite{Ksch1} or linear programming \cite{Feld1,Feld2,Tag1,Vont1}.
	
	A $(J,L)$-regular LDPC code is defined as a linear code represented by a parity-check matrix $H$ in which each column has weight $J$ and each row has weight $L$.
	
	If the sub-matrix $A_{i,j}$ of $H$ satisfies $A_{i,j}=P_N^{f(i,j)}$ with $i\in [r]$ and $j\in[n]$,
	then $\C$ with parity-check matrix $H$ is a QC LDPC code composed of circulant permutation matrices (CPM-QC LDPC codes).
	
	For non-binary case,
	define a scaling matrix $S=(s_{i,j})_{i\in [r], j\in [n]}$, where $s_{i,j}\in\F_q^*$.
	If the sub-matrix $A_{i,j}$ of $H$ satisfies $A_{i,j}=s_{i,j}\cdot P_N^{f(i,j)}$, with $i\in [r]$ and $j\in[n]$,
	then $\C$ with parity-check matrix $H$ is a QC LDPC code composed of circulant shift matrices (CSM-QC LDPC codes).
	
	The following theorem is derived from \cite{FOS1}, and it guides us on how to select CPMs to avoid the appearance of 4-cycles in the corresponding Tanner graph.
	\begin{theorem}\label{FOS11}\cite{FOS1}
		Let $H$ be the parity-check matrix of a CPM-QC LDPC code, and let $\mathcal{M}_1=\{f(i,j)|i\in [r], j\in [n]\}$. A necessary and sufficient condition for the Tanner graph representation of $H$ to have a girth at least $6$ is
		\begin{equation*}
			f(i_1,j_1)-f(i_2,j_1)\neq f(i_3,j_2)-f(i_2,j_2) {\rm {~mod~}} N
		\end{equation*}
		for any $\{f(i_1,j_1), f(i_2,j_1), f(i_3,j_2), f(i_2,j_2)\}\in {\mathcal{M}_{1}\choose 4}$.
	\end{theorem}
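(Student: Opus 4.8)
The plan is to recast the combinatorial definition of a $4$-cycle as a system of modular congruences. I would index the rows of $H$ by pairs $(i,s)$ with $i\in[r]$, $s\in\Z_N$ (the $s$-th row inside the $i$-th block row) and the columns by pairs $(j,t)$ with $j\in[n]$, $t\in\Z_N$. Since $A_{i,j}=P_N^{f(i,j)}$ has a single $1$ in each row $s$, located at column $(s+f(i,j))_N$, the entry of $H$ at position $\big((i,s),(j,t)\big)$ equals $1$ precisely when $t\equiv s+f(i,j)\pmod{N}$. In a bipartite graph the shortest cycle has even length at least $4$, so ``girth at least $6$'' is the same as ``no $4$-cycle'', and by the definition recalled in Section~II a $4$-cycle consists of four $1$-positions $\big((i_1,s_1),(j_1,t_1)\big)$, $\big((i_1,s_1),(j_2,t_2)\big)$, $\big((i_2,s_2),(j_1,t_1)\big)$, $\big((i_2,s_2),(j_2,t_2)\big)$ with $(i_1,s_1)\neq(i_2,s_2)$ and $(j_1,t_1)\neq(j_2,t_2)$.

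First I would observe that necessarily $i_1\neq i_2$ and $j_1\neq j_2$: if $i_1=i_2$, then the two $1$-positions of the permutation block $A_{i_1,j_1}$ appearing in the cycle force $s_1=s_2$, a contradiction, and symmetrically for the columns. Writing out the four incidence congruences $t_1\equiv s_1+f(i_1,j_1)$, $t_2\equiv s_1+f(i_1,j_2)$, $t_1\equiv s_2+f(i_2,j_1)$, $t_2\equiv s_2+f(i_2,j_2)\pmod{N}$ and eliminating $s_1,s_2,t_1,t_2$ yields
\[
f(i_1,j_1)-f(i_2,j_1)\equiv f(i_1,j_2)-f(i_2,j_2)\pmod{N},
\]
so the presence of a $4$-cycle forces this equality for some $i_1\neq i_2$, $j_1\neq j_2$ (the index ``$i_3$'' of the statement being $i_1$ here), i.e.\ the stated criterion fails. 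For the converse I would run the computation backwards: given $i_1\neq i_2$, $j_1\neq j_2$ with the displayed equality, set $s_1=0$, $s_2=(f(i_1,j_1)-f(i_2,j_1))_N$, $t_1=(f(i_1,j_1))_N$, $t_2=(f(i_1,j_2))_N$, and verify directly that the four positions above all carry a $1$ (the last one is exactly where the hypothesis is used) while $i_1\neq i_2$, $j_1\neq j_2$ make the quadruple a genuine, non-degenerate $4$-cycle. Combining the two directions, the Tanner graph is $4$-cycle-free, equivalently of girth at least $6$, exactly when no such $i_1,i_2,j_1,j_2$ exist, which is the claimed condition.

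I expect the main work to be bookkeeping rather than genuine mathematics: carefully matching the abstract ``change a row or a column alternately, all positions distinct'' definition of a cycle with the concrete block pattern of $H$, keeping every reduction modulo $N$ consistent, and handling the degeneracies — two incidences in the same block row or column, or coincidences among the values $f(i_k,j_\ell)$ that make the ``$4$-subset of $\mathcal{M}_1$'' phrasing of the statement slightly loose (one really wants the four index pairs, not merely their images in $\mathcal{M}_1$). Spelling these cases out cleanly is the only place a careless argument could slip.
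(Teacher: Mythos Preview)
The paper does not supply its own proof of this theorem: it is quoted verbatim as a result of Fossorier \cite{FOS1} and used as a black box, so there is nothing in the paper to compare your argument against. Your proposal is correct and is exactly the standard derivation one finds in Fossorier's original paper --- translate the $1$-positions of each CPM block into the congruence $t\equiv s+f(i,j)\pmod N$, observe that a $4$-cycle must straddle two distinct block rows and two distinct block columns because each block is a permutation matrix, and eliminate $s_1,s_2,t_1,t_2$ from the four congruences to obtain the displayed condition; the converse is the explicit construction you give. Your remark that the ``$4$-subset of $\mathcal{M}_1$'' phrasing is loose (one really needs distinct index pairs $(i_1,j_1),(i_2,j_1),(i_1,j_2),(i_2,j_2)$, and the $i_3$ in the statement should be read as $i_1$) is well taken and worth keeping in the final write-up.
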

	It is easy to verify that the above theorem still holds when $H$ is the parity-check matrix of a CSM-QC code.

	\subsection{Golomb ruler}
	Golomb rulers are credited as being ``discovered'' by W. Babcock in 1953 \cite{Bab}.
	Golomb rulers derive their name from Solomon Golomb who was one of the first to study their construction in relation to combinatorics, coding theory and communications, where most of their applications lie.
	A Golomb ruler $GR(M,v)$ is a set of $M$ integers
	$$\{a_1,a_2,\ldots,a_{M}\}\subseteq\{0,1,\ldots,v\}$$
	whose pairwise differences in absolute value take ${M\choose 2}$ distinct values exactly once. By convention, we assume that the sequence $i\mapsto a_i$ is increasing and $a_1=0$, $a_M=v.$
	A collection of integers $a_1,a_2,\ldots,a_{M}$ forms an $N$-modular Golomb ruler if $(a_i-a_j)_{N}$ are nonzero and distinct for distinct order pairs $(i,j)$, $0\leq i\neq j<n$.
	
	\subsection{Vandermonde matrix and Moore matrix}
	A Vandermonde matrix is a matrix where each row comprises terms in geometric progression. Precisely, for \(r\leq m\), an \(r\times m\) Vandermonde matrix \(V(x_1,x_2,\ldots,x_{m};r)\) is defined such that its \((i, j)\)-th entry is given by \(V_{i,j}=x_j^{i - 1}\), with \(i\in [r]\) and \(j\in [m]\).
	
	The determinant of an \(r\times r\) square Vandermonde matrix, denoted as the Vandermonde determinant, can be expressed in the following form:
	\begin{equation*}
		\det(V(x_1,x_2,\ldots,x_{r};r))=\prod_{1\leq i<j\leq r}(x_j - x_i).
	\end{equation*}
	If \(x_1, \ldots, x_{m}\) are \(m\) distinct elements from the multiplicative group \(\mathbb{F}_{q^s}^*\) and \(m \leq q^s - 1\), then by using \(V(x_1,x_2,\ldots,x_{m};r)\) as the parity-check matrix, an \([m,m - r,r + 1]\) RS-type code over \(\mathbb{F}_{q^s}\) can be obtained. If \(x_1, \ldots, x_{m}\) are \(m\) distinct elements from the set \(\{1,x,x^2,\ldots,x^{p - 1}\}\), where \(\{1,x,x^2,\ldots,x^{p - 1}\}\subset R=\mathbb{F}_2[x]/\langle1 + x+\cdots+x^{p - 1}\rangle\), \(p\) is an odd prime, and \(m\leq p - 1\), then by using \(V(x_1,x_2,\ldots,x_{m};r)\) as the parity-check matrix, a BR-type code over \(R\) can be derived. This BR-type code is a \([m,m - r,r + 1;p - 1]\) array code over \(\mathbb{F}_q\).
	
	A Moore matrix \(M(\alpha_1,\alpha_2,\ldots,\alpha_{m};r)\) is a matrix defined over a finite field \(\mathbb{F}_{q^{t_1}}\), where \(t_1\) and \(m\) are positive integers that satisfy the inequality \(r\leq m\leq t_1\). Additionally, for each \(i\in [m]\), the element \(\alpha_i\) belongs to the multiplicative group \(\mathbb{F}_{q^{t_1}}^*\).  Its columns are obtained by applying successive powers of the Frobenius automorphism. As such, it is an \(r\times m\) matrix given by
	\begin{equation}\label{Moore}
		M(\alpha_1,\alpha_2,\ldots,\alpha_{m};r)=\left(\alpha_{j}^{q^{i-1}}\right)_{r\times m}
	\end{equation}
	for \(i\in [r]\) and \(j\in [m]\).
	The Moore determinant of an \(r\times r\) square Moore matrix can be expressed as:
	\begin{equation}\label{Mooredet1}
		\det(M(\alpha_1,\alpha_2,\ldots,\alpha_{r};r))=\prod_{\mathbf{c}}(c_1\alpha_1+\cdots+c_r\alpha_r),
	\end{equation}
	where \(\mathbf{c}=(c_1,c_2,\ldots,c_r)\) traverses a complete set of direction vectors in \(\mathbb{F}_q^r\), with the specific condition that the last nonzero entry is equal to \(1\).
	\section{Block MDS LDPC codes form CMs with column weight $1$}
	In this section, we provide two schemes for constructing block MDS LDPC codes. The first scheme, which is based on CPMs, is a scheme that requires a puncturing operation. Its advantage is that it can be applied to any finite field. The second scheme, which is based on CSMs, is a scheme that does not require a puncturing operation, but it is only applicable to non-binary cases.
	\subsection{The scheme where puncturing is required}
	In this subsection, we construct block MDS array codes by appropriately puncturing array codes over $\F_q$ with the Vandermonde form $V\left(I_N,P_N,\ldots,P_{N}^{N-1};r\right)$, where $r\leq N-1$ and
	\begin{equation}\label{Van1}
		V\left(I_N,P_N,\ldots,P_{N}^{N-1};r\right)=\left(
		\begin{array}{cccc}
			I_N & I_N & \cdots & I_N \\
			I_N & P_N & \cdots & P_N^{N-1} \\
			\vdots & \vdots & \ddots & \vdots \\
			I_N & P_N^{r-1} & \cdots & P_N^{(r-1)(N-1)} \\
		\end{array}
		\right).
	\end{equation}
	It is not difficult to check that $V\left(I_N,P_N,\ldots,P_{N}^{N-1};r\right)$ has rank $Nr-r+1$.
	
	Lv {\it et al.} \cite{Fang1,Lv1} presented the following theorem to guide us on how to select polynomials and perform puncturing to obtain MDS array codes.
	\begin{lemma}\label{Lv1}\cite{Lv1}
		Let $r$ and $m$ be positive integers with $r<m$.
		Let $a_1(x),a_2(x),\ldots,a_{m}(x)\in \frac{\F_q[x]}{\langle x^N-1\rangle}$, and let $A_i$ be the associated circulant matrices of $a_i(x)$ for $i\in[m]$. Assume that $\gcd\left(a_i(x)-a_j(x),x^N-1\right)=g(x)$ for all $1\leq i<j\leq m$, where $\deg(g(x))=\tau$. Then the code with parity-check matrix
		\begin{equation*}
			H^{\prime}=\left(
			\begin{array}{cccc}
				I_{N-\tau} & I_{N-\tau} & \cdots & I_{N-\tau}  \\
				{\rm Pu}(A_1,\tau) & {\rm Pu}(A_2,\tau) & \cdots & {\rm Pu}(A_l,\tau)  \\
				{\rm Pu}(A_1^2,\tau) & {\rm Pu}(A_2^2,\tau) & \cdots & {\rm Pu}(A_l^2,\tau)  \\
				\vdots & \vdots & \ddots & \vdots \\
				{\rm Pu}(A_1^{r-1},\tau) & {\rm Pu}(A_2^{r-1},\tau) & \cdots & {\rm Pu}(A_l^{r-1},\tau)  \\
			\end{array}
			\right)
		\end{equation*}
		is an MDS array code with sub-packetization level $N-\tau$.
	\end{lemma}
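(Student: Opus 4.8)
Throughout write $n_0:=N-\tau$. The plan is: by Proposition \ref{proiff} the code with parity-check matrix $H'$ is an MDS array code of sub-packetization $n_0$ as soon as, for every $\{t_1,\dots,t_r\}\in{[m]\choose r}$, the associated $r\times r$ block submatrix $M'=\big({\rm Pu}(A_{t_j}^{\,i-1},\tau)\big)_{1\le i,j\le r}$ is invertible. I would prove this \emph{not} via the block Vandermonde formula of Lemma \ref{BlockVan} (it does not apply here: although the full circulants commute, the punctured blocks ${\rm Pu}(A_i^{a},\tau)$ do not commute in general), but through the Chinese Remainder Theorem. Identify $\F_q^N$ with $R:=\F_q[x]/(x^N-1)$ via $e_k\leftrightarrow x^k$; under a suitable such identification $A_i$ acts as multiplication by $a_i(x)$ (if one's convention makes $A_i$ multiplication by the reciprocal of $a_i$ instead, reciprocate inside $R$ — this preserves $\deg\gcd(a_i-a_j,x^N-1)$, so it is harmless). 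Writing $\tilde c_j(x)$ for the degree-$<n_0$ polynomial obtained from the $j$-th node $C_j\in\F_q^{n_0}$ padded with $\tau$ zeros, the very definition of ${\rm Pu}$ shows that $M'$ is singular iff there exist such $\tilde c_1,\dots,\tilde c_r$, not all zero, with
\[
\sum_{j=1}^{r}a_{t_j}(x)^{\,i}\,\tilde c_j(x)\ \in\ W\quad\text{in }R,\qquad i=0,1,\dots,r-1,
\]
where $W:=\operatorname{span}_{\F_q}\{x^{n_0},\dots,x^{N-1}\}$; so I must derive $\tilde c_1=\cdots=\tilde c_r=0$.

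Factor $x^N-1=g(x)h(x)$ with $\deg g=\tau$, $\deg h=n_0$. Since $\gcd(N,q)=1$ this factorization is into coprime factors and $g(0)\neq 0$, so CRT gives an $\F_q$-algebra isomorphism $R\cong R_g\times R_h$ with $R_g=\F_q[x]/(g)$, $R_h=\F_q[x]/(h)$. Two structural facts are then established. First, reduction mod $g$ restricts to an $\F_q$-isomorphism $W\xrightarrow{\ \sim\ }R_g$: it is injective because every $w\in W$ has the form $w=x^{n_0}q(x)$ with $\deg q\le\tau-1<\tau=\deg g$, whence $g\mid w$ together with $\gcd(g,x)=1$ forces $q=0$; and $\dim_{\F_q}W=\tau=\dim_{\F_q}R_g$. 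Consequently, under $R\cong R_g\times R_h$, the subspace $W$ is the graph of a linear map $\psi\colon R_g\to R_h$; likewise $V_0:=\operatorname{span}_{\F_q}\{1,x,\dots,x^{n_0-1}\}$ is the graph of the linear map $\chi\colon R_h\to R_g$ sending $q$ to its degree-$<n_0$ representative reduced mod $g$. Second, because $g\mid a_{t_k}-a_{t_l}$ while $\gcd(a_{t_k}-a_{t_l},x^N-1)=g$ \emph{exactly}, the residues $\beta:=(a_{t_j}\bmod g)\in R_g$ coincide for all $j$, whereas the residues $\alpha_j:=(a_{t_j}\bmod h)\in R_h$ satisfy: each difference $\alpha_k-\alpha_l$ $(k\ne l)$ is a unit of $R_h$, being $(g\bmod h)\cdot\big((a_{t_k}-a_{t_l})/g\bmod h\big)$, a product of units.

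Now set $q_j:=(\tilde c_j\bmod h)\in R_h$ — which equals $\tilde c_j$ since $\deg\tilde c_j<\deg h$ — and $p_j:=(\tilde c_j\bmod g)=\chi(q_j)\in R_g$, and $P:=\sum_j p_j$. Splitting the displayed constraint through $R\cong R_g\times R_h$: its $R_g$-part is $\beta^{\,i}P$, its $R_h$-part is $\sum_j\alpha_j^{\,i}q_j$, and membership in $W$ says precisely that the $R_h$-part equals $\psi$ of the $R_g$-part. Hence, with $V_\alpha:=(\alpha_j^{\,i})_{0\le i\le r-1,\ 1\le j\le r}$ and $\mathbf q:=(q_1,\dots,q_r)^{\mathsf T}$,
\[
V_\alpha\,\mathbf q=\big(\psi(\beta^{\,i}P)\big)_{i=0}^{r-1}.
\]
By the second fact, $\det V_\alpha=\prod_{k<l}(\alpha_l-\alpha_k)$ is a unit of $R_h$, so $V_\alpha$ is invertible over $R_h$; since the top row of $V_\alpha$ is $(1,\dots,1)$ we get $(1,\dots,1)V_\alpha^{-1}=(1,0,\dots,0)$, hence $\sum_j q_j=\psi(\beta^{0}P)=\psi(P)$. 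Combining with $P=\sum_j p_j=\sum_j\chi(q_j)=\chi\!\big(\sum_j q_j\big)=\chi(\psi(P))$ gives $P\in\ker(\mathrm{id}_{R_g}-\chi\circ\psi)$. But a kernel element $p$ corresponds under $R\cong R_g\times R_h$ to $(p,\psi(p))$, which lies in the graph of $\psi$ (i.e.\ in $W$) and, because $p=\chi(\psi(p))$, also in the graph of $\chi$ (i.e.\ in $V_0$); thus it yields a vector of $W\cap V_0$. Since $W$ and $V_0$ are spanned by complementary subsets of the monomial basis of $R$, we have $W\cap V_0=\{0\}$, so $P=0$, hence $\mathbf q=V_\alpha^{-1}\mathbf 0=\mathbf 0$, i.e.\ all $\tilde c_j=0$ — a contradiction. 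Therefore every $M'$ is invertible, and Proposition \ref{proiff} finishes the proof.

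The main obstacle is conceptual rather than computational: seeing that "puncturing the last $\tau$ rows and columns'' must be read, after the CRT splitting $R\cong R_g\times R_h$, as "the $\F_q[x]/(g)$-factor carries $\tau$ coordinates and the $\F_q[x]/(h)$-factor the remaining $n_0$'', which is exactly what converts the hypothesis $\gcd(a_i-a_j,x^N-1)=g$ into invertibility of the Vandermonde matrix $V_\alpha$ over $R_h$. Once that is in place, the two slightly delicate points — the collapse $\sum_j q_j=\psi(P)$ and the identification of $\ker(\mathrm{id}-\chi\psi)$ with the intersection $W\cap V_0=\{0\}$ — are where I would be most careful, double-checking the two "graph'' descriptions of $W$ and $V_0$ and the unit claim for $\alpha_k-\alpha_l$ in full detail.
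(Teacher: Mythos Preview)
The paper does not prove Lemma~\ref{Lv1}; it is quoted from \cite{Lv1} (see also \cite{Fang1}) without argument, so there is no in-paper proof to match against. Your CRT-based argument is correct: the reformulation of ``$M'$ singular'' as ``$\sum_j a_{t_j}^{\,i}\tilde c_j\in W$'' is exactly what puncturing means, the graph descriptions of $W$ and $V_0$ under $R\cong R_g\times R_h$ are valid because the projections $W\to R_g$ and $V_0\to R_h$ are $\F_q$-isomorphisms by the dimension/injectivity check you give, the unit claim for $\alpha_k-\alpha_l$ follows from $\gcd(g,h)=1$ together with $\gcd\!\big((a_{t_k}-a_{t_l})/g,\,h\big)=1$, and the fixed-point step $P=\chi\psi(P)\Rightarrow P\in W\cap V_0=\{0\}$ is clean.

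For context, the closest proof the paper itself carries out is that of Theorem~\ref{MimiLv1} (the Moore-form analogue with $\tau=1$), and the method there is quite different from yours. The authors work with the \emph{left} kernel of the \emph{unpunctured} $rN\times rN$ block matrix, use the block-determinant identity (Lemma~\ref{CMsMoo}) plus the gcd hypothesis to pin that kernel down as the line spanned by $\big(\tfrac{x^N-1}{x-1},\ldots,\tfrac{x^N-1}{x-1}\big)$, pad $\mathbf c'\mapsto(\mathbf c',0)$ to conclude the row-truncated matrix has rank $r(N-1)$, and finally drop the last column using the even-row-weight hypothesis. Your route instead attacks the right kernel directly and replaces the block-determinant machinery by the CRT splitting: this handles general $\tau$ uniformly and makes the hypothesis $\gcd(a_i-a_j,x^N-1)=g$ visibly equivalent to invertibility of the Vandermonde $V_\alpha$ over $R_h$, at the price of introducing the auxiliary maps $\psi,\chi$ and the short fixed-point argument. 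Both approaches are sound; yours is arguably the more transparent self-contained proof of the general lemma.
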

	If \(N\) is an odd prime, Fan \cite{Fan1} proved that for the parity-check matrix $H$ as in (\ref{Van1}), the corresponding Tanner graph has no 4-cycles.
	The code constructed by the following theorem is both an MDS array code and free of $4$-cycles.
	\begin{theorem}\label{thmfirst}
		Assume that $q$ is a primitive root modulo $N$, where $N$ is an odd prime. Then the code with parity-check matrix
		\begin{equation}\label{VVV1}
			H^{\prime\prime}=\left(
			\begin{array}{cccc}
				I_{N-1} & I_{N-1} & \cdots & I_{N-1} \\
				I_{N-1} & {\rm Pu}(P_{N},1) & \cdots & {\rm Pu}(P_N^{N-1},1) \\
				\vdots & \vdots & \ddots & \vdots \\
				I_{N-1} & {\rm Pu}(P_N^{J-1},1) & \cdots & {\rm Pu}(P_N^{(J-1)(N-1)},1) \\
			\end{array}
			\right)
		\end{equation}
		is both an MDS array code over $\F_q$ with sub-packetization level $N-1$ and free of $4$-cycles.
	\end{theorem}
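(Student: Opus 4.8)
The plan is to prove the two assertions separately, each by reducing to a result already available: the MDS array property will follow from Lemma~\ref{Lv1} of Lv {\it et al.}, and the absence of $4$-cycles will follow from Fan's theorem for the unpunctured block Vandermonde matrix together with the elementary fact that deleting rows and columns of a parity-check matrix cannot create new cycles in its Tanner graph.

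For the MDS array property I would invoke Lemma~\ref{Lv1} with $m=N$, $r=J$, $\tau=1$, and the polynomials $a_l(x)=x^{l-1}$ for $l\in[N]$, whose associated circulant matrices are $A_l=P_N^{l-1}$. Since $\bigl(P_N^{l-1}\bigr)^{k-1}=P_N^{(k-1)(l-1)}$, the matrix $H'$ produced by Lemma~\ref{Lv1} for this choice is exactly $H''$ of (\ref{VVV1}). It then remains to verify the single hypothesis of that lemma, namely that $\gcd\bigl(a_i(x)-a_j(x),\,x^N-1\bigr)$ is one and the same monic polynomial of degree $\tau=1$ for every pair $i\neq j$. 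For $1\le j<i\le N$ one has $a_i(x)-a_j(x)=x^{j-1}\bigl(x^{i-j}-1\bigr)$, and since $x^{j-1}$ is coprime to $x^N-1$,
\[
\gcd\bigl(a_i(x)-a_j(x),\,x^N-1\bigr)=\gcd\bigl(x^{i-j}-1,\,x^N-1\bigr)=x^{\gcd(i-j,\,N)}-1=x-1,
\]
the last equality because $N$ is prime and $1\le i-j\le N-1$. Thus Lemma~\ref{Lv1} applies (here one needs $J\le N-1$ so that $r<m$, and $\gcd(q,N)=1$, which is ensured by the primitive-root hypothesis), and it gives that the code with parity-check matrix $H''$ is an MDS array code over $\F_q$ with sub-packetization level $N-\tau=N-1$.

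For the absence of $4$-cycles I would argue that $H''$ is a submatrix of the unpunctured matrix $V\bigl(I_N,P_N,\ldots,P_N^{N-1};J\bigr)$ of (\ref{Van1}). Indeed, $H''$ is obtained from it by deleting, inside each $N\times N$ block $P_N^{(k-1)(l-1)}$, the last row and the last column; in global coordinates this deletes the rows $\{kN-1:k\in[J]\}$ and the columns $\{lN-1:l\in[N]\}$. Deleting a row (resp.\ a column) of a parity-check matrix removes a check node (resp.\ a variable node) together with all incident edges, so the Tanner graph of $H''$ is the subgraph of the Tanner graph of $V\bigl(I_N,P_N,\ldots,P_N^{N-1};J\bigr)$ induced on the surviving nodes; every cycle of $H''$ is therefore a cycle of $V\bigl(I_N,P_N,\ldots,P_N^{N-1};J\bigr)$. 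Since $N$ is an odd prime, the result of Fan \cite{Fan1} recalled just before the theorem guarantees that $V\bigl(I_N,P_N,\ldots,P_N^{N-1};J\bigr)$ has no $4$-cycle in its Tanner graph, and hence neither does $H''$.

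The only genuinely delicate point I expect is the first step: confirming that $H''$ is precisely the matrix delivered by Lemma~\ref{Lv1}, and in particular that puncturing exactly one coordinate per node is the correct amount. That amount is dictated by the degree of $\gcd\bigl(a_i-a_j,\,x^N-1\bigr)$, which equals $1$ exactly because $1$ is the only $N$-th root of unity annihilating $x^{i-j}-1$ when $N$ is prime (equivalently, via the generalized Schur formula of Lemma~\ref{BlockVan}, each commuting factor $P_N^{t-1}-P_N^{t'-1}$ in the block-Vandermonde determinant is a singular circulant whose only vanishing eigenvalue sits at the root of unity $\zeta=1$, and puncturing one coordinate of every node kills exactly that eigenspace). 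Once the MDS step is secured, the $4$-cycle statement is nearly immediate, since Fan has already carried out the combinatorial analysis for the unpunctured CPM-form matrix and puncturing never introduces new cycles.
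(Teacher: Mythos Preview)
Your proposal is correct and follows essentially the same route as the paper: invoke Lemma~\ref{Lv1} with the monomials $1,x,\dots,x^{N-1}$ to get the MDS array property, and invoke Fan's result on the unpunctured block-Vandermonde matrix together with the fact that puncturing cannot create cycles to get the girth bound. The only minor difference is that you compute $\gcd(x^{i-j}-1,x^N-1)=x^{\gcd(i-j,N)}-1=x-1$ directly from the primality of $N$, whereas the paper reaches the same conclusion by first using the primitive-root hypothesis to factor $x^N-1=(x-1)(1+x+\cdots+x^{N-1})$ into irreducibles; your computation is in fact slightly sharper, since it shows that for the MDS step the primitive-root assumption is used only to guarantee $\gcd(q,N)=1$.
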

	\begin{proof}
		It can be readily verified that when \(N\) is an odd prime, the puncturing operation from \(H^{\prime}\) to obtain \(H^{\prime\prime}\) will not lead to the emergence of $4$-cycles. Subsequently, we confirm that \(H^{\prime\prime}\) is an MDS array matrix with a sub-packetization level of \(N - 1\). The polynomials corresponding to \(I_N, P_N,\ldots, P_N^{N - 1}\) in \(\frac{\mathbb{F}_q[x]}{\langle x^n - 1\rangle}\) are \(1,x,\ldots,x^{N - 1}\) respectively. Given that \(q\) is a primitive root modulo \(N\), the irreducible factors of \(x^N - 1\) are \(x - 1\) and \(1 + x+\cdots+x^{N - 1}\). In accordance with Lemma \ref{Lv1}, the greatest common divisor of the difference between any two monomials \(x^{i_1}\) and \(x^{i_2}\) and \(x^N - 1\) can only be \(x - 1\). Thus, after puncturing the last row and the last column of each block of \(H^{\prime}\), we obtain the aforementioned results.
	\end{proof}
	Artin's conjecture \cite{Mor1} posits that for any integer \(a\) that is neither \(\pm1\) nor a perfect square, there are infinitely many prime numbers \(p\) for which \(a\) is a primitive root modulo \(p\). In 1967, Hooley \cite{Hoo1} established the truth of this conjecture under the assumption of the Generalized Riemann Hypothesis. At present, the entire number theory community holds the view that the Generalized Riemann Hypothesis is correct \cite{Con1}. Consequently, we assert that a family of block MDS LDPC codes can be constructed in accordance with Theorem \ref{thmfirst}.
	\begin{example}\label{ex7}
		For high rate and moderate length (say up to about $5000$), the LDPC codes with the parity-check matrix as in (\ref{Van1}) perform as well as the best comparable randomly constructed regular LDPC codes presented in \cite{MacK0}.
		We choose \(N = 67\). It is straightforward to verify that \(N\) is a prime number and \(2\) is a primitive root of \(N\). When the parity-check matrix is as shown in (\ref{Van1}), we obtain a \((4489, 4288)\) LDPC code \(\mathcal{C}_1\). By puncturing according to Theorem \ref{thmfirst}, we get a \((4422, 4224)\) block MDS LDPC code \(\mathcal{C}_2\). Adopting the degree distributions of $\C_1$ and \(\mathcal{C}_2\), the bit-error rate (BER) and block-error rate (BLER) performances of an LDPC code \(\mathcal{C}_3\) and \(\mathcal{C}_4\) randomly generated by the Progressive Edge-Growth (PEG) algorithm are presented in Figure \ref{f1}. Their rates are all 0.955.  We compare the decoding performance of the following fours codes under AWGN channel, with sum-product decoder where max iteration is set to $50$. As can be observed from Figure \ref{f1}, there is almost no degradation in the BER and BLER performance of \(\mathcal{C}_2\).
		\begin{figure}[H]
			\centering
			\includegraphics[width=0.8\textwidth]{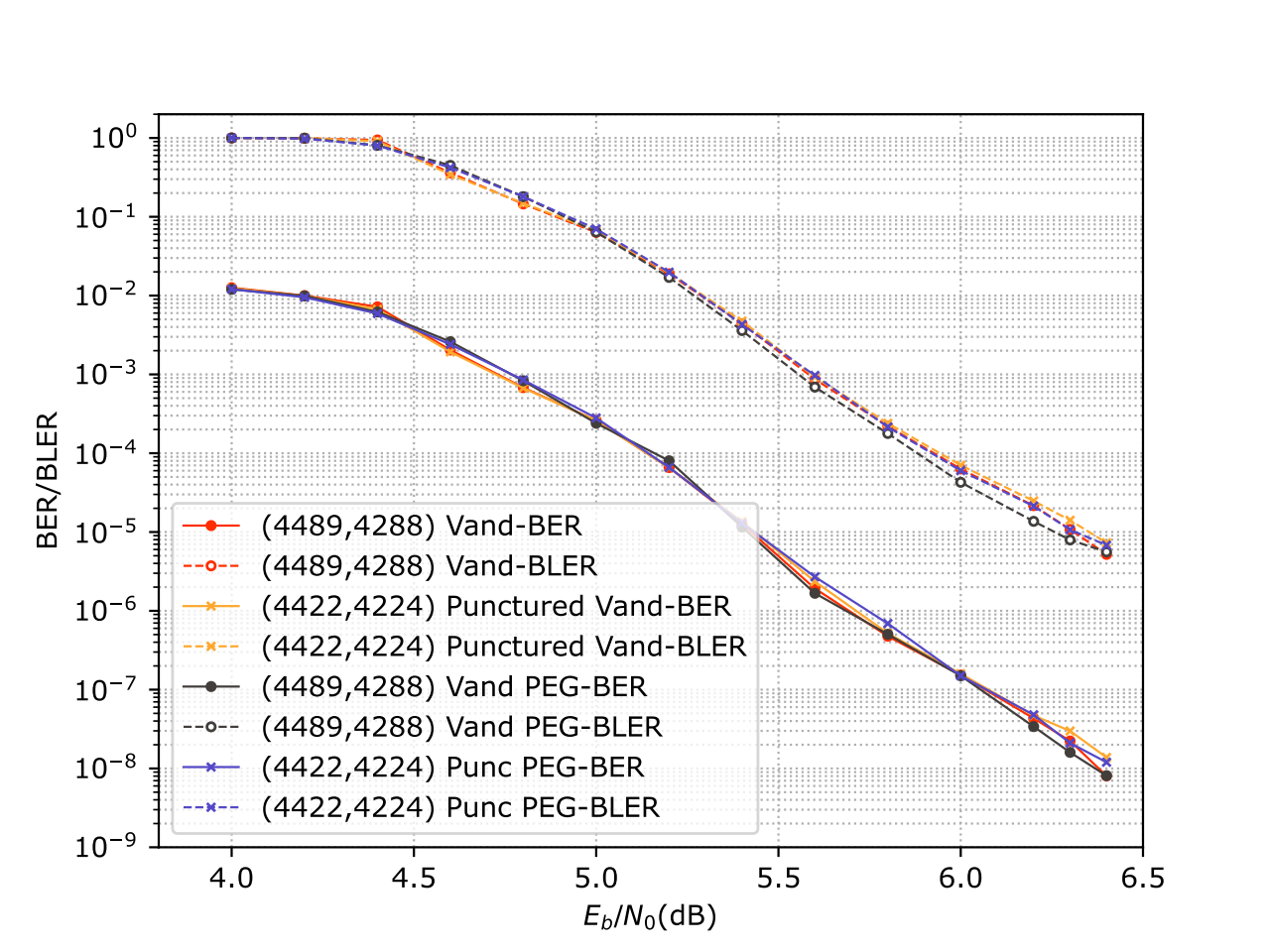}
			\caption{The BER and BLER performances of four LDPC codes}
			\label{f1}
		\end{figure}

	\end{example}
	Under the condition of solely preventing the existence of $4$-cycles in the Tanner graph, the block MDS LDPC codes constructed in accordance with Theorem \ref{thmfirst} possess a code length of \(N(N - 1)\) and a code rate of \((N - r)/N\). Here, \(N\) is an odd prime number for which $2$ serves as a primitive root, and \(r\) represents the redundancy of the array code. Fan \cite{Fan1} proposed a strategy for eliminating short cycles, albeit at the expense of sacrificing the code length.
	If the goal is to eliminate the $6$-cycles in \(V(I_N, P_N,\ldots, P_{N}^{N - 1})\), we can utilize the matrix \(V(P_N^{i_0}, P_N^{i_1},\ldots, P_{N}^{i_{s - 1}})\) as the parity-check matrix. In this case, \(1\leq s - 1\leq N - 1\), and \(M_2=\{i_0, i_1,\ldots, i_{s - 1}\}\subset \mathbb{Z}_N\) is such that for any \(\{k_1, k_2, k_3\}\in {\mathbb{Z}_s \choose 3}\), the inequality \(N\nmid (k_1 - k_3)i_{1}+(k_2 - k_1)i_{2}+(k_3 - k_2)i_{3}\) holds. Based on this strategy, the following example can be derived.
	\begin{example}\label{ex8}
		We adopt the parity-check matrices of $\mathcal{C}_1$ and $\mathcal{C}_2$ in Example \ref{ex7}, denoted as $H_1$ and $H_2$ respectively. To eliminate the $6$-cycles in their corresponding Tanner graphs, we only retain the columns (here referring to block-columns) of $H_1$ and $H_2$ on the set $M_2$, where $M_2=\{23,29,12,55,43,28,37,38,10,65,56,41\}\subset[67]$. It is easy to verify that $M_2$ satisfies the above-mentioned condition for eliminating $6$-cycles. Therefore, the girths of the LDPC codes $\mathcal{C}_5$ and $\mathcal{C}_6$ with $H_1^{\prime}$ and $H_2^{\prime}$ as their parity-check matrices are greater than or equal to $8$ (we verified that the girth is $8$ using Python). The parameters of $\mathcal{C}_5$ and $\mathcal{C}_6$ are $(871,670)$ and $(858,660)$ respectively, and both have a code rate of $0.769$.
		
		For comparison, we select the first $13$ columns of $H_1$ and $H_2$ to obtain new parity-check matrices $H_1^{\prime\prime}$ and $H_2^{\prime\prime}$, and the corresponding LDPC codes are $\mathcal{C}_7$ and $\mathcal{C}_8$. The parameters of $\mathcal{C}_7$ and $\mathcal{C}_8$ are $(871,670)$ and $(858,660)$ respectively, and both have a code rate of $0.769$. Their girths are both $6$. As can be seen from Figure \ref{f2}, the performance of $\mathcal{C}_5$ and $\mathcal{C}_6$ is superior to that of $\mathcal{C}_7$ and $\mathcal{C}_8$. Therefore, increasing the girth can indeed improve the performance of LDPC codes under this construction.
		\begin{figure}[H]
			\centering
			\includegraphics[width=0.8\textwidth]{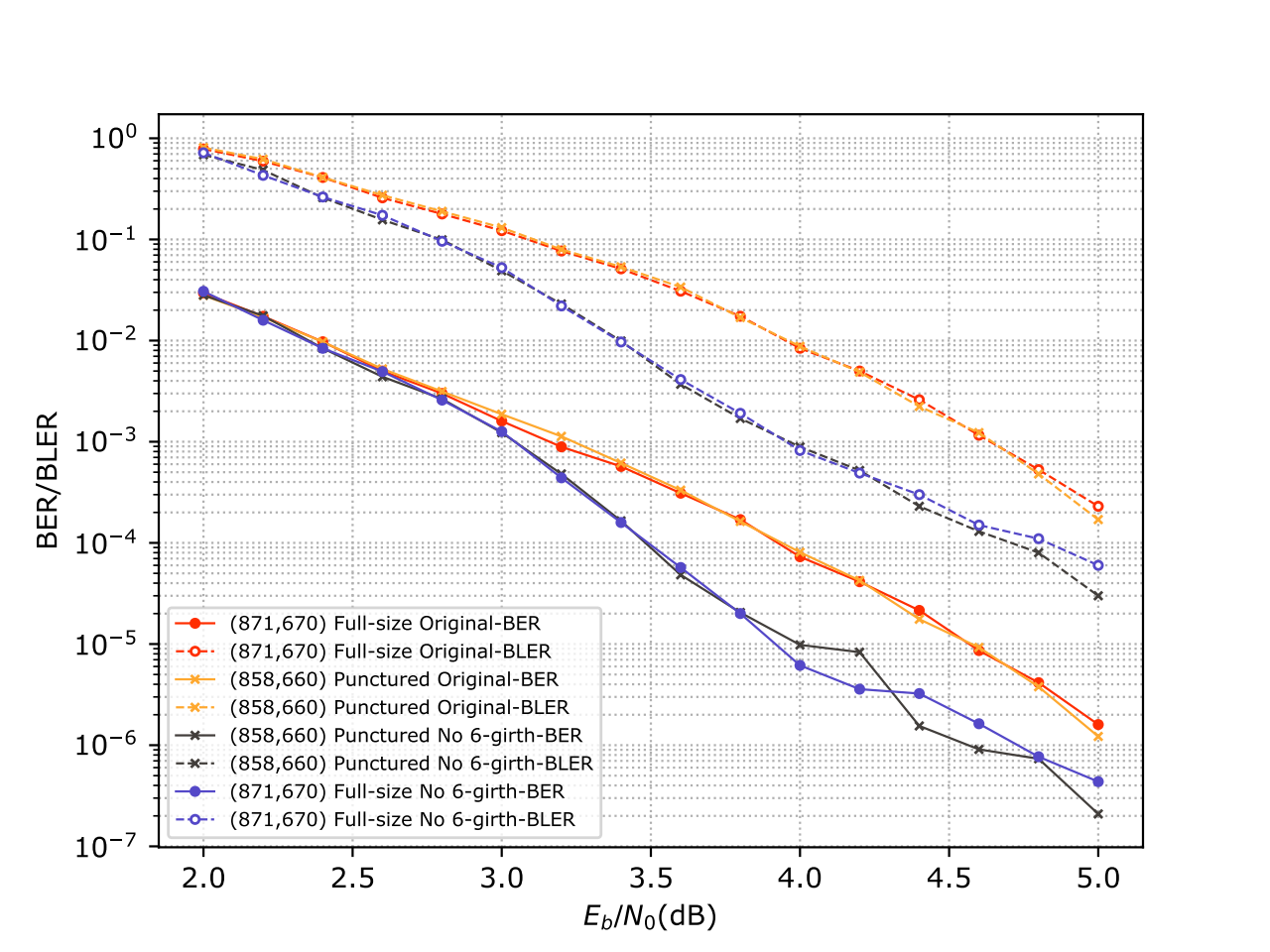}
			\caption{The BER and BLER performances of eight LDPC codes}
			\label{f2}
		\end{figure}
	\end{example}
	\subsection{The scheme where puncturing is not required}
	As Fossorier pointed out, \(H\) in (\ref{H1}) cannot have full row rank. Therefore, there is no binary CPM-QC LDPC code without $4$-cycles.
	Tauz {\it et al.}\cite{Tau1} utilized the generalized Schur's formula (i.e., Lemma \ref{BlockVan}) and the necessary and sufficient condition (i.e., Theorem \ref{FOS11}) provided by Fossorier \cite{FOS1} to avoid the occurrence of $4$-cycles. By combining with the scaling matrix $S$, they could theoretically construct non-binary block MDS LDPC codes. As can be seen from Lemma \ref{BlockVan}, it is not easy to implement the generalized Schur's formula to determine whether an array code is an MDS array code. In this subsection, we present a more easily implementable method for constructing non-binary block MDS LDPC codes.
	
	Define the scaling matrix $S=(s_{t_j}^{i-1})_{i\in[r], j\in [m]}$ over $\F_q$, where $\{t_1\ldots,t_m\}\in {[N]\choose m}$ and $(s_{t_1},\ldots,s_{t_{m}})\in(\mathbb{F}_q^*)^m$. Consider the Vandermonde matrix $V\left(s_{t_1}\cdot P_N^{t_1},\ldots,s_{t_m}\cdot P_{N}^{t_m};r\right)$, where $m>r$ and
	{\small\begin{equation}\label{Van2}
			V\left(s_{t_1}\cdot P_N^{t_1},\ldots,s_{t_m}\cdot P_{N}^{t_m};r\right)=\left(
			\begin{array}{cccc}
				I_N & I_N & \cdots & I_N \\
				s_{t_1}\cdot P_N^{t_1} & s_{t_2}\cdot P_N^{t_2} & \cdots & s_{t_m}\cdot P_N^{t_m} \\
				\vdots & \vdots & \ddots & \vdots \\
				s_{t_1}^{r-1}\cdot P_N^{t_1\cdot(r-1)} & s_{t_2}^{r-1}\cdot P_N^{t_2\cdot(r-1)} & \cdots & s_{t_m}^{r-1}\cdot P_N^{t_m\cdot(r-1)} \\
			\end{array}
			\right).
	\end{equation}}
	\begin{theorem}\label{Taurefine}
		Let $m$ and $N$ be positive integers with $N\geq m>r$ and let $\F_q$ be a non-binary field with $q-1\geq m$.
		Suppose there exist $S_1=\{t_1,\ldots,t_m\}\in {[N]\choose m}$ and $S_2=\{s_{t_1},\ldots,s_{t_{m}}\}\in {\mathbb{F}_q^*\choose m}$ such that $\gcd\left(s_{t_i}x^{t_i}-s_{t_j}x^{t_j},x^N - 1\right)=1$ for all $\{i,j\}\in{[m]\choose 2}$. Then the code with the parity-check matrix as in (\ref{Van2}) is both an MDS array code $\C$ over $\F_q$ with parameters $(m,m-r,d_A(\C)=r+1;N)$ and free of $4$-cycles.
	\end{theorem}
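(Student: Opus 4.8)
The plan is to establish the two assertions separately: that $\C$ is an MDS array code with parameters $(m,m-r,r+1;N)$, and that its Tanner graph contains no $4$-cycle.

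\emph{MDS property.} By Proposition \ref{proiff} (equivalently the block-submatrix criterion recalled right before Lemma \ref{BlockVan}) it suffices to show that for every $\{j_1,\dots,j_r\}\in\binom{[m]}{r}$ the $r\times r$ block submatrix $M_{(j_1,\dots,j_r)}=\bigl(A_{j_l}^{i-1}\bigr)_{i,l\in[r]}$ is invertible, where $A_j:=s_{t_j}\cdot P_N^{t_j}$. Each $A_j$ lies in the commutative ring $\F_q[P_N]\cong\F_q[x]/\langle x^N-1\rangle$, so the blocks commute pairwise and Lemma \ref{BlockVan} applies; moreover the Vandermonde identity $\det(X_j^{i-1})_{i,j\in[r]}=\prod_{1\le k<l\le r}(X_l-X_k)$ is a formal identity valid over any commutative ring, so substituting the $A_{j_l}$ for the $X_l$ turns Lemma \ref{BlockVan} into $\det M_{(j_1,\dots,j_r)}=\det\bigl(\prod_{1\le k<l\le r}(A_{j_l}-A_{j_k})\bigr)=\prod_{1\le k<l\le r}\det(A_{j_l}-A_{j_k})$. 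Hence the problem reduces to showing that each difference $A_i-A_j$ with $\{i,j\}\in\binom{[m]}{2}$ is invertible. But $A_i-A_j$ is precisely the circulant matrix whose associated polynomial is $s_{t_i}x^{t_i}-s_{t_j}x^{t_j}\bmod(x^N-1)$, and under the ring isomorphism above a circulant matrix is invertible iff its associated polynomial is a unit of $\F_q[x]/\langle x^N-1\rangle$, i.e.\ iff $\gcd(s_{t_i}x^{t_i}-s_{t_j}x^{t_j},x^N-1)=1$, which is exactly the hypothesis. This gives the MDS property. For the parameters: some $r\times r$ block submatrix is invertible (take any $r$ of the $m>r$ block-columns), so $H$ has full row rank $Nr$, whence $\dim_{\F_q}\C=N(m-r)$; then $d_A(\C)=m-(m-r)+1=r+1$ by Theorem \ref{Singletonbound} together with the MDS property.

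\emph{Absence of $4$-cycles.} The matrix in (\ref{Van2}) is the parity-check matrix of a CSM-QC code with shift function $f(i,j)=(i-1)t_j\bmod N$ and scaling $s_{i,j}=s_{t_j}^{i-1}$. Tracking the nonzero positions block by block — equivalently applying the CSM version of Theorem \ref{FOS11} with this $f$ — shows that a $4$-cycle can occur only if there exist block-columns $j_1\neq j_2$ in $[m]$ and block-rows $i_1\neq i_2$ in $[r]$ with $(t_{j_1}-t_{j_2})(i_1-i_2)\equiv 0\pmod N$. I expect excluding this to be the main obstacle. When $N$ is prime — the case of all the examples and parallel to Theorem \ref{thmfirst} — it is immediate, since $t_{j_1}-t_{j_2}$ and $i_1-i_2$ are both nonzero modulo $N$ (their absolute values being at most $N-1$ and at most $r-1<N$), so a prime cannot divide their product. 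For a general $N$ coprime to $q$ one must, in addition, guarantee $\gcd(t_{j_1}-t_{j_2},N)<N/(r-1)$ for every pair — that is, take $S_1$ sufficiently spread modulo $N$, which is compatible with $m\le N$ and with the coprimality hypothesis. With that inequality in hand Theorem \ref{FOS11} delivers girth at least $6$, completing the argument.
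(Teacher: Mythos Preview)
Your MDS argument is correct and is essentially a direct unpacking of what the paper invokes: the paper simply cites Lemma~\ref{Lv1} with $\tau=0$ (so no puncturing is needed), and the proof of that lemma over the ring $\F_q[x]/\langle x^N-1\rangle$ amounts precisely to the block--Vandermonde computation you carry out via Lemma~\ref{BlockVan}. So on the MDS half you and the paper are aligned, your version being the more self-contained one.

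On the $4$-cycle half you actually go further than the paper. The paper's proof says only ``directly obtained from Lemma~\ref{Lv1}'' and does not address cycles at all; presumably it regards the Fossorier check as routine in view of Theorem~\ref{FOS11}. Your reduction to $(i_1-i_2)(t_{j_1}-t_{j_2})\not\equiv 0\pmod N$ is exactly right, and so is your observation that the stated $\gcd$ hypothesis on $s_{t_i}x^{t_i}-s_{t_j}x^{t_j}$ does \emph{not} by itself force this when $N$ is composite: one can have $\gcd(s_{t_i}x^{t_i}-s_{t_j}x^{t_j},x^N-1)=1$ while $t_i-t_j$ shares a nontrivial factor with $N$ and $r>2$. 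Thus the ``free of $4$-cycles'' clause genuinely needs $N$ prime (the case of every example and of the remark immediately following the theorem) or the extra spacing condition on $S_1$ you describe. In short, your proposal proves what the paper proves and, on the cycle side, is more careful about the hypotheses actually required.
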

	\begin{proof}
		The proof of this theorem can be directly obtained from Lemma \ref{Lv1}. It is worth noting that we need to restrict that $\mathbb{F}_q$ is not a binary field and $q-1\geq m$. Otherwise, $\gcd(s_{t_i}x^{t_i}-s_{t_j}x^{t_j},x^N - 1)$ cannot be equal to $1$.
	\end{proof}
	Similar to Theorem \ref{thmfirst}, we can restrict that $q$ is a primitive root modulo $N$, where $N$ is an odd prime, so as to reduce the difficulty of searching for $S_1$ and $S_2$.
	\begin{example}\label{ex10}
		Assume \( m =N = 67 \) and $r=3$. Let \(\mathbb{F}_{128} = \left\{ \sum_{i=0}^{6} a_i \alpha^{i} \mid a_i \in \mathbb{F}_2 \right\}\), where \(\alpha\) is a root of the polynomial \(x^7 + x + 1\). It is easy to verify that 128 is a primitive root modulo 67. For convenience, we use the binary representation of a number \(a\) to denote the corresponding element in \(\mathbb{F}_{128}\). For example, since \(a=15 = 1 + 1 \cdot 2 + 1 \cdot 2^2 + 1 \cdot 2^3\), the number $a$ represents the element \(1 + \alpha + \alpha^2 + \alpha^3 \).
		Select the sets \(S_1 = [67]\) and \(S_2 = \{86, 15, 44, 57, 55, 37, 21, 98, 10, 79, 51, 71, 16, 76, 28, 124, 36, 13, 61, 24, 58, 39, 101, 31, 29, 9,\\ 113, 22, 75, 60, 100, 45, 116, 46, 26, 68, 5, 84, 53, 59, 110, 92, 97, 3, 94, 105, 33, 65, 95, 32, 99,\\ 126, 89, 12, 119, 62, 120, 27, 17, 109, 102, 52, 115, 11, 127, 104, 106\} \subset \mathbb{F}_{128}^*.\)
		It is readily verified that the sets \( S_1 \) and \( S_2 \) satisfy the conditions in Theorem \ref{Taurefine}. Thus, we obtain a block MDS LDPC code \( \mathcal{C}_9 \), whose parameters as an array code are \( (67, 64, d_A(\C_9)=4; 67) \). Figure \ref{fig3} shows the decoding performance curves of \( \mathcal{C}_9 \) over $\F_{128}$ under the AWGN channel.
		\begin{figure}[H]
			\centering
			\includegraphics[width=0.8\textwidth]{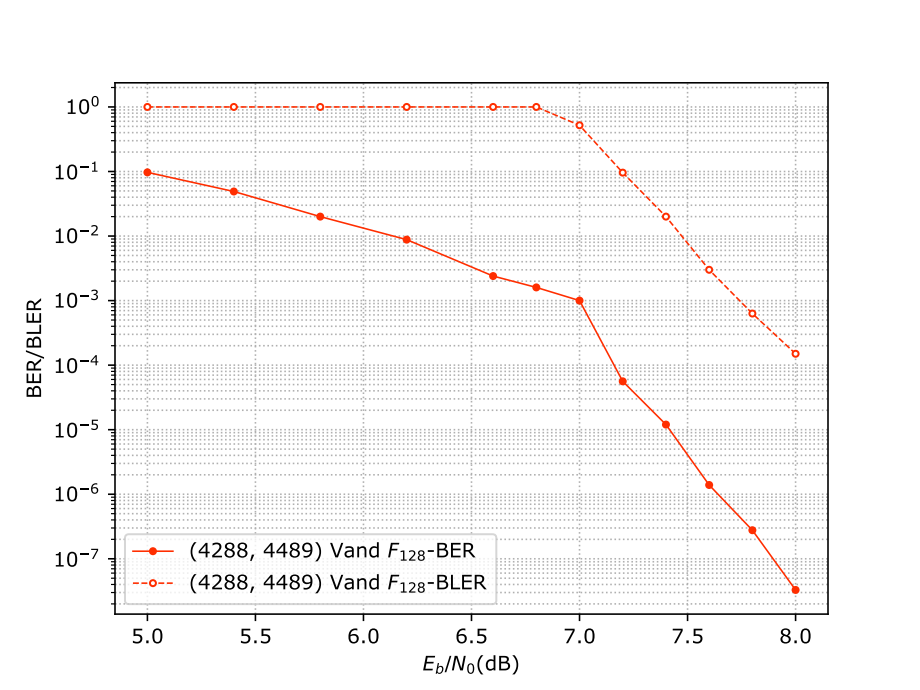}
			\caption{The BER and BLER performances of $\C_9$ over $\F_{128}$}
			\label{fig3}
		\end{figure}
	\end{example}
	\section{Block MDS LDPC codes form CMs with column weight $>1$}
	In this section, we delve into the construction of block MDS LDPC codes over $\F_q$ based on punctured CM$(t)$s, where CM$(t)$ denotes a circulant matrix with a column weight (or row weight) of \(t\). In this section, we default that the characteristic of $\mathbb{F}_q$ is $2$.
	
	Smarandache and Vontobel \cite{Smar1,Smar2} studied CM$(t)$-QC LDPC codes and gave upper bounds on their minimum distance, where $t$ can take values of $0, 1, 2, 3$ (and sometimes larger). He {\it et al.} \cite{He1} proposed a PEG algorithm applicable to CM$(t)$-QC LDPC codes.
	From the perspective of the fundamental criterion for eliminating the presence of $4$-cycles in the Tanner graph, using circulant matrices with a column weight greater than $1$ as a basic building block does not seem to be a prudent choice.  However, the results of our simulation experiments reveal that the LDPC codes constructed with PuCM$(t)$ as the block exhibit excellent performance. This compels us to give due consideration to this situation.
	
	Although, with the assistance of Theorem \ref{thmfirst}, an MDS array code can be obtained by utilizing a Vandermonde matrix as the parity-check matrix, when the column weight of block \(A_{i}\) is greater than $1$, \(A_{i}^s\) has a high probability of becoming highly dense. As a result, it becomes impossible to avoid the formation of $4$-cycles for some values of \(s\). For example, consider the polynomial \(a(x)\in\frac{\mathbb{F}_2[x]}{\langle x^N - 1\rangle}\), and its corresponding \(7\times7\) matrix is given by:
	\[A = \begin{bmatrix}
		1 & 1 & 0 & 0 & 0 & 0 & 0 \\
		0 & 1 & 1 & 0 & 0 & 0 & 0 \\
		0 & 0 & 1 & 1 & 0 & 0 & 0 \\
		0 & 0 & 0 & 1 & 1 & 0 & 0 \\
		0 & 0 & 0 & 0 & 1 & 1 & 0 \\
		0 & 0 & 0 & 0 & 0 & 1 & 1 \\
		1 & 0 & 0 & 0 & 0 & 0 & 1
	\end{bmatrix}.\]
	Then,
	\[A^3 = \begin{bmatrix}
		1 & 1 & 1 & 1 & 0 & 0 & 0 \\
		0 & 1 & 1 & 1 & 1 & 0 & 0 \\
		0 & 0 & 1 & 1 & 1 & 1 & 0 \\
		0 & 0 & 0 & 1 & 1 & 1 & 1 \\
		1 & 0 & 0 & 0 & 1 & 1 & 1 \\
		1 & 1 & 0 & 0 & 0 & 1 & 1 \\
		1 & 1 & 1 & 0 & 0 & 0 & 1
	\end{bmatrix}.\]
	At this juncture, $4$-cycles are present in \(A^3\). For \(a(x)\in \frac{\mathbb{F}_2[x]}{\langle x^N-1\rangle}\), we can effectively circumvent the issue of having an excessive number of non-zero coefficients in \(a^s(x)\) by choosing a parity-check matrix in Moore form.
	
	We will carry out a detailed case-by-case analysis on the methods to prevent the appearance of $4$-cycles in the Tanner graph of \(H\). Suppose that \(a(x)=x^{i_1}+x^{i_2}+\cdots + x^{i_t}\in\frac{\mathbb{F}_2[x]}{\langle x^N-1\rangle}\), where \(i_1 < i_2<\cdots < i_t\in\mathbb{Z}_N\). For the sake of simplicity and convenience, we define \({\rm Index}(a(x))=\{i_1,i_2,\ldots,i_t\}\) and \(L_j(a(x))=\{(k,k + i_j)|k\in [N]\}\) for \(j\in[t]\). We also define \(\Delta(M)=\left\{(i_{j_1}-i_{j_2})_N|\{i_{j_1},i_{j_2}\}\in{M\choose 2}\right\}\), where \(M\subset\mathbb{Z}_N\).
	Furthermore, we define the addition and subtraction operations between sets as \(A + B=\{(a + b)_N\mid a\in A, b\in B\}\) and \(A - B=\{(a - b)_N\mid a\in A, b\in B\}\) respectively, where \(A\) and \(B\) are subsets of \(\mathbb{Z}_N\).
	\subsection*{Absence of $4$-cycles in a single CM$(t)$}
	When \(t = 1\), it is impossible for any cycles to appear in a single CM\((t)\). However, when \(t>1\), cycles will occur in a single CM\((t)\).
	\begin{lemma}\label{lemm11}
		Let \(N\) be an odd integer. Suppose \(t=2\) and \(a(x)=x^{i_1}+x^{i_2}\in \frac{\F_q[x]}{\langle x^N-1\rangle}\), where \(i_1<i_2\in\mathbb{Z}_N\). Then the girth \(g\) of the corresponding matrix CM\((t)\) is \(g = \frac{2N}{\gcd(N,i_2 - i_1)}>4\). Suppose \(t>2\) and \(a(x)=x^{i_1}+x^{i_2}+\cdots + x^{i_t}\in \frac{\F_q[x]}{\langle x^N-1\rangle}\), where \(i_1 < i_2<\cdots < i_t\in\mathbb{Z}_N\). Then the girth \(g\) of the matrix CM\((t)\) is greater than 4 if \(\Delta ({\rm Index}(a(x))\) is a set.
	\end{lemma}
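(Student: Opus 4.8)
\noindent\emph{Proof idea.} The plan is to convert cycles in the Tanner graph of CM$(t)$ into additive relations in $\mathbb{Z}_N$. Write $I={\rm Index}(a(x))=\{i_1,\dots,i_t\}$ and recall that position $(r,c)$ of the circulant matrix is nonzero exactly when $(c-r)_N\in I$. A closed walk of length $2s$ in the Tanner graph is a list of check nodes $r_1,\dots,r_s$ and variable nodes $c_1,\dots,c_s$ with both $(r_k,c_k)$ and $(r_{k+1},c_k)$ nonzero (indices taken modulo $s$, so $r_{s+1}=r_1$); it is a genuine $2s$-cycle precisely when the $r_k$ are pairwise distinct and the $c_k$ are pairwise distinct. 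Putting $p_k=(c_k-r_k)_N$ and $q_k=(c_k-r_{k+1})_N$, these requirements become $p_k,q_k\in I$, with $p_k\ne q_k$ (from $r_k\ne r_{k+1}$), $p_{k+1}\ne q_k$ (from $c_k\ne c_{k+1}$), the relation $r_{k+1}-r_k\equiv p_k-q_k\pmod N$, and the distinctness of $r_1,\dots,r_s$.

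For $t=2$: since $\{p_k,q_k\}\subseteq\{i_1,i_2\}$ and $p_k\ne q_k$, necessarily $\{p_k,q_k\}=\{i_1,i_2\}$; then $p_{k+1}\ne q_k$ forces $p_{k+1}$ to be the element of $\{i_1,i_2\}$ distinct from $q_k$, i.e.\ $p_{k+1}=p_k$. Hence all the pairs $(p_k,q_k)$ are equal and $r_{k+1}-r_k\equiv\varepsilon(i_2-i_1)\pmod N$ for a fixed $\varepsilon\in\{+1,-1\}$, so $r_k\equiv r_1+(k-1)\varepsilon(i_2-i_1)$. Closing the walk forces $s(i_2-i_1)\equiv0\pmod N$, i.e.\ $\tfrac{N}{\gcd(N,i_2-i_1)}\mid s$, while the $r_k$ (and hence the $c_k=r_k+p_k$) are pairwise distinct iff $s\le\tfrac{N}{\gcd(N,i_2-i_1)}$. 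Thus the shortest cycle has $s=\tfrac{N}{\gcd(N,i_2-i_1)}$, and one checks directly that this $s$ (with, say, $r_k=(k-1)(i_2-i_1)$, $c_k=r_k+i_1$) yields a genuine cycle; therefore $g=\tfrac{2N}{\gcd(N,i_2-i_1)}$. Since $i_2-i_1\in\{1,\dots,N-1\}$, the number $\gcd(N,i_2-i_1)$ is a proper divisor of the odd integer $N$, hence $\le N/3$, which gives $g\ge6>4$.

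For $t>2$ it suffices to rule out a $4$-cycle ($s=2$). Such a cycle gives rows $r_1\ne r_2$, columns $c_1\ne c_2$, and indices $a,b,c,d$ with $i_a=(c_1-r_1)_N$, $i_b=(c_2-r_1)_N$, $i_c=(c_1-r_2)_N$, $i_d=(c_2-r_2)_N$; here $c_1\ne c_2$ gives $a\ne b$ and $r_1\ne r_2$ gives $a\ne c$, and subtracting the two ``same column'' relations yields $(i_a-i_c)_N=(i_b-i_d)_N=(r_2-r_1)_N\ne0$. Thus $(i_a,i_c)$ and $(i_b,i_d)$ are ordered pairs of distinct indices that are distinct ordered pairs (because $a\ne b$) yet have the same nonzero difference mod $N$; this contradicts the hypothesis that $\Delta({\rm Index}(a(x)))$ is a set, i.e.\ that ${\rm Index}(a(x))$ is an $N$-modular Golomb ruler. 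Hence no $4$-cycle exists and $g>4$.

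The arithmetic above is routine; the two points needing care are the rigidity step in the $t=2$ case (the constraint $p_{k+1}\ne q_k$ is exactly what pins every step to the single difference $\pm(i_2-i_1)$ and hence produces the clean formula), and, for $t>2$, reading the hypothesis as distinctness of the \emph{ordered} differences modulo $N$: the weaker ``Golomb ruler'' condition on positive differences alone does not suffice, since two positive differences summing to $N$ would already force a $4$-cycle.
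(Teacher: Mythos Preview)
Your proof is correct and follows essentially the same approach as the paper: both translate cycles in the Tanner graph of the circulant into additive relations in $\mathbb{Z}_N$ and then invoke the modular Golomb ruler property. Your parametrization via the sequences $(p_k,q_k)$ is a bit more systematic than the paper's explicit path-tracing through labelled nodes, and your closing remark about reading the hypothesis as distinctness of \emph{ordered} differences modulo $N$ is precisely the interpretation the paper intends (it equates ``$\Delta({\rm Index}(a(x)))$ is a set'' with ${\rm Index}(a(x))$ being an $N$-modular Golomb ruler).
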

	\begin{proof}
		We commence by examining the case where \(t = 2\). Without loss of generality, we select the node \((1,1 + i_2)\) as the initial node. Traversing along the vertical path, we arrive at the node \((1 + i_2 - i_1,1 + i_2)\). Subsequently, following the horizontal path, we encounter the node \((1 + i_2 - i_1,1 + 2i_2 - i_1)\). Then, traversing vertically once again, we reach the node \((1 + 2i_2 - 2i_1,1 + 2i_2 - i_1)\).
		If the node \((1,1 + i_2)\) lies on a $4$-cycle, then \(N\mid 2(i_2 - i_1)\). This contradicts the fact that \(N\) is an odd integer. Hence, we conclude that \(g>4\).
		Proceeding from the node \((1 + 2i_2 - 2i_1,1 + 2i_2 - i_1)\), after traversing another horizontal and vertical path, we will reach the node \((1 + 3i_2 - 3i_1,1 + 3i_2 - 2i_1)\). If the node \((1,1 + i_2)\) lies on a $6$-cycle, then \(N\mid 3(i_2 - i_1)\). In the case where \(N\nmid 3(i_2 - i_1)\), by continuing the same traversal steps, we can infer that the node \((1,1 + i_2)\) will lie on a \(\frac{2N}{\gcd(N,i_2 - i_1)}\)-cycle. This is because \(N\) must divide \(\gcd(N,i_2 - i_1)\cdot (i_2 - i_1)\).
		
		When \(t>2\), the four nodes that form a $4$-cycle must be sourced from either three sets \(L_{j_1}(a(x)), L_{j_2}(a(x)), L_{j_3}(a(x))\) or four sets \(L_{j_1}(a(x)), L_{j_2}(a(x)), L_{j_3}(a(x)), L_{j_4}(a(x))\) for some \(\{j_1,j_2,j_3,j_4\}\in\binom{\text{Index}(a(x))}{4}\). These scenarios correspond to the following two diagrams respectively:
		\[
		\begin{minipage}{\textwidth}
			\footnotesize
			\begin{equation*}
				\xymatrix{
					\left(1,1 + i_{j_1}\right)\in  L_{j_1}(a(x))\ar[d] & \left(1+\underline{(i_{j_1}-i_{j_2}+i_{j_3}-i_{j_2})_N},1+(i_{j_1}-i_{j_2}+i_{j_3})_N\right)\in L_{j_2}(a(x)) \ar[l] \\
					\left(1+(i_{j_1}-i_{j_2})_N,1 + i_{j_1}\right)\in L_{j_2}(a(x)) \ar[r] & \left(1+(i_{j_1}-i_{j_2})_N,1+(i_{j_1}-i_{j_2}+i_{j_3})_N\right)\in L_{j_3}(a(x)) \ar[u]
				}
			\end{equation*}
			\begin{equation*}
				\xymatrix{
					\left(1,1 + i_{j_1}\right)\in  L_{j_1}(a(x))\ar[d] & \left(1+\underline{(i_{j_1}-i_{j_2}+i_{j_3}-i_{j_4})_N},1+(i_{j_1}-i_{j_2}+i_{j_3})_N\right)\in L_{j_4}(a(x)) \ar[l] \\
					\left(1+(i_{j_1}-i_{j_2})_N,1 + i_{j_1}\right)\in L_{j_2}(a(x)) \ar[r] & \left(1+(i_{j_1}-i_{j_2})_N,1+(i_{j_1}-i_{j_2}+i_{j_3})_N\right)\in L_{j_3}(a(x)) \ar[u]
				}
			\end{equation*}
		\end{minipage}
		\]
		To prevent the occurrence of $4$-cycles, we only need to ensure that the underlined elements in the two diagrams are never zero. If \(\Delta(\text{Index}(a(x)))\) is a set, it implies that \(\text{Index}(a(x))\) effectively forms an \(N\)-modular Golomb ruler. Consequently, this serves as a sufficient condition for avoiding the appearance of $4$-cycles within a single CM\((t)\).
	\end{proof}
	\subsection*{Absence of $4$-cycles between two CM$(t)$s}
	Suppose \(H\) is in the form shown in (\ref{LDPCparitycheck}). There may be $4$-cycles in $\left(
	\begin{array}{c}
		A_{i_1,j_1} \\
		A_{i_2,j_1} \\
	\end{array}
	\right)$
	or \(\left(A_{i_1,j_1},A_{i_1,j_2}\right)\). The following lemma guides us to avoid $4$-cycles between two CM$(t)$s.
	\begin{lemma}\label{lemm12}
		Let \(N\) be an odd integer. Suppose \(t\geq 2\), \(a_1(x)=\sum_{j=1}^t x^{i_{1,j}}\), and \(a_2(x)=\sum_{j=1}^t x^{i_{2,j}}\) where \(i_{1,j}, i_{2,j}\in\mathbb{Z}_N\) for $j\in [t]$. Then the girth \(g\) of the matrix \((A_1,A_2)\) is greater than $4$ if both \(\Delta ({\rm Index}(a_1(x)))\) and \(\Delta ({\rm Index}(a_2(x)))\) are sets and $$\Delta ({\rm Index}(a_1(x)))\cap \Delta ({\rm Index}(a_2(x)))=\varnothing,$$ where \(A_1\) and \(A_2\) are the corresponding CM$(t)$s of \(a_1(x)\) and \(a_2(x)\) respectively.
	\end{lemma}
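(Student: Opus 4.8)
The plan is to rule out every possible $4$-cycle of the $N\times 2N$ matrix $(A_1,A_2)$ by splitting on which block the two columns of the cycle lie in. Recall that a $4$-cycle is a quadruple of nonzero positions $(r_1,c_1),(r_1,c_2),(r_2,c_2),(r_2,c_1)$ with $r_1\neq r_2$ and $c_1\neq c_2$, and that a position $(r,c)$ is nonzero in $A_\ell$ precisely when $(r,c)\in L_j(a_\ell(x))$ for some $j\in[t]$, i.e.\ $c\equiv r+i_{\ell,j}\pmod N$. If $c_1$ and $c_2$ both index columns of a single block $A_\ell$, then these four positions already form a $4$-cycle of the standalone matrix $A_\ell$, which Lemma~\ref{lemm11} forbids (unconditionally when $t=2$, and using that $\Delta({\rm Index}(a_\ell(x)))$ is a set when $t>2$). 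Hence only the crossing configuration survives: after relabelling, $c_1$ is a column of $A_1$ and $c_2$ a column of $A_2$.

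For the crossing case --- the heart of the lemma --- I would argue as follows. The two nonzero positions $(r_1,c_1),(r_2,c_1)$ of $A_1$ supply indices $a,a'\in[t]$ with $c_1\equiv r_1+i_{1,a}\equiv r_2+i_{1,a'}\pmod N$, hence $(r_2-r_1)_N=(i_{1,a}-i_{1,a'})_N$; since $r_1\neq r_2$ this element is nonzero, so $a\neq a'$ and $(r_2-r_1)_N\in\Delta({\rm Index}(a_1(x)))$. Running the same computation on $(r_1,c_2),(r_2,c_2)$ in $A_2$ yields $b\neq b'$ with $(r_2-r_1)_N=(i_{2,b}-i_{2,b'})_N\in\Delta({\rm Index}(a_2(x)))$. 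Thus the nonzero element $(r_2-r_1)_N$ lies in $\Delta({\rm Index}(a_1(x)))\cap\Delta({\rm Index}(a_2(x)))$, contradicting the assumed emptiness of this intersection; therefore $(A_1,A_2)$ has girth $g>4$. Equivalently, one may redraw the two square diagrams from the proof of Lemma~\ref{lemm11} with two of the four lines taken from $a_1(x)$ and two from $a_2(x)$, so that the underlined entry becomes $(i_{1,a}-i_{1,a'})_N=(i_{2,b}-i_{2,b'})_N$, which the hypothesis forbids from vanishing.

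I do not expect a serious obstacle: the proof is a short case split in which each hypothesis discharges exactly one case --- $\Delta({\rm Index}(a_1(x)))$ a set for a $4$-cycle inside $A_1$, $\Delta({\rm Index}(a_2(x)))$ a set for one inside $A_2$, and disjointness of the two difference sets for a crossing $4$-cycle. The one step that needs care is the crossing case, where I must check that the four cycle vertices truly lie on four distinct $L_j$-lines (two per block), keep the mod-$N$ arithmetic consistent so that the shared quantity $(r_2-r_1)_N$ is literally the same element on both sides, and confirm that it is nonzero before invoking disjointness. It is also worth stating explicitly, to justify the reduction, that a $4$-cycle using only columns of $A_\ell$ is literally a $4$-cycle of $A_\ell$ in isolation, so adjoining another circulant block cannot manufacture one.
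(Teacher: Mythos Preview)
Your proposal is correct and is essentially the same argument as the paper's: both split into the single-block case (handled by Lemma~\ref{lemm11}) and the crossing case, where the paper traces the square diagram to the underlined quantity $(i_{1,j_1}-i_{1,j_2}+i_{2,j_3}-i_{2,j_4})_N$, while you equivalently equate $(r_2-r_1)_N$ to an element of each $\Delta$-set. The two formulations are interchangeable, and your phrasing via the shared row-difference $(r_2-r_1)_N$ is arguably the cleaner of the two.
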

	\begin{proof}
		The condition ``both \(\Delta ({\rm Index}(a_1(x)))\) and \(\Delta ({\rm Index}(a_2(x)))\) are sets'' is to ensure that there are no $4$-cycles in a single CM\((t)\) by Lemma \ref{lemm11}.
		The four nodes that form a $4$-cycle must have two nodes coming from \(L_{j_1}(a_1(x))\) and \(L_{j_2}(a_1(x))\), and the remaining two nodes come from \(L_{j_3}(a_2(x))\) and \(L_{j_4}(a_2(x))\) for some \(\{j_1,j_2\}\in {{\rm Index}(a_1(x))\choose 2}\) and \(\{j_3,j_4\}\in {{\rm Index}(a_2(x))\choose 2}\). The corresponding diagram is as follows:
		{\tiny \begin{equation*}
				\xymatrix{
					\left(1,1+i_{1,j_1}\right)\in  L_{j_1}(a_1(x))\ar[d] & \left(1+\underline{(i_{1,j_1}-i_{1,j_2}+i_{2,j_3}-i_{2,j_4})_N},1+(i_{1,j_1}-i_{1,j_2}+i_{2,j_3})_N\right)\in L_{j_4}(a_2(x)) \ar[l] \\
					\left(1+(i_{1,j_1}-i_{1,j_2})_N,1+i_{1,j_1}\right)\in L_{j_2}(a_1(x)) \ar[r] & \left(1+(i_{1,j_1}-i_{1,j_2})_N,1+(i_{1,j_1}-i_{1,j_2}+i_{j_3})_N\right)\in L_{j_3}(a_2(x)) \ar[u]
				}
		\end{equation*}}
		To avoid the occurrence of 4-cycles, it is sufficient to ensure that the underlined elements in the above diagram are never zero. Since \(\Delta ({\rm Index}(a_1(x)))\cap \Delta ({\rm Index}(a_2(x))) = \varnothing\), the underlined elements are definitely not zero.
	\end{proof}
	\subsection*{Absence of $4$-cycles among four CM$(t)$s}
	There may be $4$-cycles in $\left(
	\begin{array}{cc}
		A_{i_1,j_1} & A_{i_1,j_2}\\
		A_{i_2,j_1} & A_{i_2,j_2}\\
	\end{array}
	\right)$. The following lemma helps us ensure the absence of 4-cycles in this case.
	\begin{lemma}\label{suff1}
		Let \(N\) be an odd integer. Suppose \(t\geq 2\), \(a_{1,1}(x)=\sum_{j=1}^t x^{i_{1,1,j}}\), \(a_{1,2}(x)=\sum_{j=1}^t x^{i_{1,2,j}}\), \(a_{2,1}(x)=\sum_{j=1}^t x^{i_{2,1,j}}\), and \(a_{2,2}(x)=\sum_{j=1}^t x^{i_{2,2,j}}\), where \(i_{1,1,j}, i_{1,2,j}, i_{2,1,j}, i_{2,2,j}\in\mathbb{Z}_N\) for $j\in [t]$. Then the girth \(g\) of the matrix $\left(
		\begin{array}{cc}
			A_{1,1} & A_{1,2}\\
			A_{2,1} & A_{2,2}\\
		\end{array}
		\right)$ exceeds $4$ if the following three conditions hold:
		\begin{itemize}
			\item [{\rm (i)}]Each of the sets \(\Delta ({\rm Index}(a_{1,1}(x)))\), \(\Delta ({\rm Index}(a_{1,2}(x)))\), \(\Delta ({\rm Index}(a_{2,1}(x)))\), and\\ \(\Delta ({\rm Index}(a_{2,2}(x)))\) is a set.
			\item [{\rm (ii)}]Vertical direction:
			$$\Delta ({\rm Index}(a_{1,1}(x)))\cap \Delta ({\rm Index}(a_{2,1}(x)))=\Delta ({\rm Index}(a_{1,2}(x)))\cap \Delta ({\rm Index}(a_{2,2}(x)))=\varnothing.$$
			Horizontal direction:
			$$\Delta ({\rm Index}(a_{1,1}(x)))\cap \Delta ({\rm Index}(a_{1,2}(x)))=\Delta ({\rm Index}(a_{2,1}(x)))\cap \Delta ({\rm Index}(a_{2,2}(x)))=\varnothing.$$
			\item [{\rm (iii)}]\(0\notin({\rm Index}(a_{1,1}(x))+ {\rm Index}(a_{2,2}(x)))- ({\rm Index}(a_{1,2}(x))+ {\rm Index}(a_{2,1}(x)))\),
		\end{itemize}
		where \(A_{1,1}\), \(A_{1,2}\), \(A_{2,1}\), and \(A_{2,2}\) are the corresponding CM$(t)$s of \(a_{1,1}(x)\), \(a_{1,2}(x)\), \(a_{2,1}(x)\), and \(a_{2,2}(x)\) respectively.
	\end{lemma}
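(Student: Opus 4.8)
To prove Lemma~\ref{suff1}, the plan is to classify every potential $4$-cycle of the $2\times 2$ block matrix in the statement according to which of the four blocks $A_{1,1},A_{1,2},A_{2,1},A_{2,2}$ carries its four incident $1$-positions, and then to rule out each class using one of conditions (i)--(iii) together with Lemmas~\ref{lemm11} and~\ref{lemm12}. The starting observation is that a $4$-cycle is the same as a combinatorial rectangle: four distinct $1$-positions $(r_1,c_1),(r_1,c_2),(r_2,c_1),(r_2,c_2)$ with $r_1\neq r_2$ and $c_1\neq c_2$. Writing $\rho_1,\rho_2\in\{1,2\}$ for the block-rows containing $r_1,r_2$ and $\gamma_1,\gamma_2\in\{1,2\}$ for the block-columns containing $c_1,c_2$, the four corners lie in the blocks $A_{\rho_1,\gamma_1},A_{\rho_1,\gamma_2},A_{\rho_2,\gamma_1},A_{\rho_2,\gamma_2}$, so precisely three mutually exclusive situations can occur: (a) $\rho_1=\rho_2$ and $\gamma_1=\gamma_2$, all four corners in a single block; (b) exactly one of $\rho_1=\rho_2$, $\gamma_1=\gamma_2$ holds, the corners split two and two over the horizontal pair $(A_{1,1},A_{1,2})$ or $(A_{2,1},A_{2,2})$, or over the stack of $A_{1,1}$ above $A_{2,1}$ or of $A_{1,2}$ above $A_{2,2}$; (c) $\rho_1\neq\rho_2$ and $\gamma_1\neq\gamma_2$, one corner in each of the four blocks. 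Since each block index here ranges over $\{1,2\}$, this list is exhaustive.

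For situation (a) I would invoke condition (i), which guarantees that every $\Delta({\rm Index}(a_{i,j}(x)))$ is a set, so by Lemma~\ref{lemm11} no single $A_{i,j}$ contains a $4$-cycle. For situation (b) I would note that any such $4$-cycle is a $4$-cycle of one of the submatrices $(A_{1,1},A_{1,2})$, $(A_{2,1},A_{2,2})$ and the two vertical stacks, and that conditions (i) and (ii) are precisely the hypotheses of Lemma~\ref{lemm12} for each of them; for the two vertical stacks I would first transpose, which sends a CM$(t)$ to a CM$(t)$ whose index set is negated modulo $N$ --- an operation preserving both the property ``$\Delta$ of the index set is a set'' and the disjointness of two such $\Delta$-sets --- and does not change the girth.

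The main work is situation (c). Relabelling so that $\rho_1=\gamma_1=1$ and $\rho_2=\gamma_2=2$, let $u,w$ be the inner rows of $r_1,r_2$ inside block-rows $1,2$ and $v,v'$ the inner columns of $c_1,c_2$ inside block-columns $1,2$; then the corners occupy inner positions $(u,v)$ in $A_{1,1}$, $(u,v')$ in $A_{1,2}$, $(w,v)$ in $A_{2,1}$, $(w,v')$ in $A_{2,2}$, and incidence in all four blocks amounts to the existence of $k_1,k_2,k_3,k_4\in[t]$ with, modulo $N$,
\[
v-u\equiv i_{1,1,k_1},\qquad v'-u\equiv i_{1,2,k_2},\qquad v-w\equiv i_{2,1,k_3},\qquad v'-w\equiv i_{2,2,k_4}.
\]
Subtracting the first two equations and the last two, and equating the common value $v-v'$, eliminates $u,v,v',w$ and leaves $i_{1,1,k_1}+i_{2,2,k_4}\equiv i_{1,2,k_2}+i_{2,1,k_3}\pmod{N}$; conversely, any $k_1,k_2,k_3,k_4$ satisfying this congruence can be completed to a valid quadruple $(u,v,v',w)$ by fixing $u$ and solving successively, and the resulting rectangle is automatically nondegenerate because $r_1,r_2$ lie in different block-rows and $c_1,c_2$ in different block-columns. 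Hence a situation-(c) $4$-cycle exists if and only if $0\in({\rm Index}(a_{1,1}(x))+{\rm Index}(a_{2,2}(x)))-({\rm Index}(a_{1,2}(x))+{\rm Index}(a_{2,1}(x)))$, which condition (iii) forbids. Combining the three situations would give that the matrix has no $4$-cycle, i.e.\ $g>4$.

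I expect the only real friction to be the bookkeeping: making sure the rectangle classification is genuinely exhaustive, aligning the inner-coordinate congruences in situation (c) so that the relation surviving the elimination is verbatim condition (iii), and checking that transposition correctly reduces the two vertical subcases of (b) to Lemma~\ref{lemm12}. Each of these is a short index computation of the same flavour as those already performed for Lemmas~\ref{lemm11} and~\ref{lemm12}, so I do not anticipate needing any idea beyond the rectangle/difference-set correspondence.
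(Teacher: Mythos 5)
Your proposal is correct and follows essentially the same route as the paper: classify $4$-cycles by how their four corners distribute over the blocks, dispose of the one-block and two-block cases via conditions (i)--(ii) together with Lemmas~\ref{lemm11} and~\ref{lemm12}, and for a cycle touching all four blocks trace the closure condition to the congruence $i_{1,1,k_1}+i_{2,2,k_4}\equiv i_{1,2,k_2}+i_{2,1,k_3}\pmod N$, which is exactly what condition (iii) excludes. Your treatment is in fact slightly more careful than the paper's in two minor respects --- explicitly reducing the vertical two-block case to Lemma~\ref{lemm12} by transposition, and checking the converse completion of a solution to a genuine rectangle --- but these do not change the argument.
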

	\begin{proof}
		Conditions (i) and (ii) are designed to guarantee the non-existence of $4$-cycles, both within a single CM\((t)\) and between two CM\((t)\)s by Lemma $\ref{lemm11}$ and Lemma \ref{lemm12}. Consider a $4$-cycle formed by four nodes. Necessarily, one node must be from \(L_{j_1}(a_{1,1}(x))\), one from \(L_{j_2}(a_{2,1}(x))\), one from \(L_{j_3}(a_{2,2}(x))\), and one from \(L_{j_4}(a_{1,2}(x))\), where \(j_1\in {\rm Index}(a_{1,1}(x))\), \(j_2\in {\rm Index}(a_{2,1}(x))\), \(j_3\in {\rm Index}(a_{2,2}(x))\), and \(j_4\in {\rm Index}(a_{1,2}(x))\). The corresponding graphical representation is given by:
		\begin{equation*}
			\xymatrix{
				\left(1,f_1\right)\in  L_{j_1}(a_{1,1}(x))\ar[d] & \left(\underline{f_4},f_3\right)\in L_{j_4}(a_{1,2}(x)) \ar[l] \\
				\left(f_2,f_1\right)\in L_{j_2}(a_{2,1}(x)) \ar[r] & \left(f_2,f_3\right)\in L_{j_3}(a_{2,2}(x)) \ar[u]
			}
		\end{equation*}
		Here, \(f_1 = 1 + i_{1,1,j_1}\), \(f_2=1+(i_{1,1,j_1}-i_{2,1,j_2})_N\), \(f_3=1+(i_{1,1,j_1}-i_{2,1,j_2}+i_{2,2,j_3})_N\), and \(f_4=1+(i_{1,1,j_1}-i_{2,1,j_2}+i_{2,2,j_3}-i_{1,2,j_4})_N=1+(i_{1,1,j_1}+i_{2,2,j_3}-(i_{2,1,j_2}+i_{1,2,j_4}))_N\). It can be readily verified that condition (iii) serves as a sufficient condition to ensure that \(f_4\) is never equal to $1$.
	\end{proof}
	By adhering to the three constraints given in Lemma 12, we can select appropriate polynomials \(a_1(x), a_2(x), \ldots, a_m(x) \in \frac{\mathbb{F}_2[x]}{\langle x^n - 1\rangle}\) to construct the Moore matrix \(M(a_1(x), a_2(x),\\ \ldots, a_m(x))\), in which the corresponding Tanner graph is absent of $4$-cycles.
	
	Our next task is to construct MDS array codes with Moore form.
	A similar work is that Ye and Barg \cite[Lemma 14]{Ye1} gave a calculation formula for the Block Vandermonde determinant when constructing MDS array codes. Equation (\ref{Mooredet1}) is only applicable when the entries in the determinant are elements $\alpha_i$ in the finite field $\mathbb{F}_{q^{t_1}}$. When these $\alpha_i$ are replaced by CM$(t)$s, we also need to give a calculation formula for the Moore determinant of CM$(t)$s.
	
	\begin{lemma}\label{CMsMoo}
		(CMs Moore determinant) Let \(A_1, A_2, \ldots, A_r\) be \(r\) \(N\times N\) CM$(t)$s over \(\mathbb{F}_q\). Then
		\begin{equation}\label{CMsMooredet1}
			\det(M(A_1,A_2,\ldots,A_{r};r))=\prod_{\mathbf{c}}(c_1\cdot A_1+\cdots+c_r\cdot A_r),
		\end{equation}
		where \(\mathbf{c}=(c_1, c_2, \ldots, c_r)\) runs over a complete set of direction vectors in \(\mathbb{F}_q^r\), and is specified by having the last non-zero entry equal to \(1\).
	\end{lemma}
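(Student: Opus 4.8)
The plan is to read \eqref{CMsMooredet1} as an identity inside the commutative ring $R=\mathbb{F}_q[x]/\langle x^N-1\rangle$ of $N\times N$ circulant matrices over $\mathbb{F}_q$. A product, and an $\mathbb{F}_q$-linear combination, of circulant matrices is again circulant, so both sides of \eqref{CMsMooredet1} lie in $R$: the right side literally does, and the left side is the Leibniz determinant $\det_R\bigl(M(A_1,\dots,A_r;r)\bigr)=\sum_{\pi\in{\rm Sym}_r}{\rm sgn}(\pi)\prod_{i=1}^{r}A_{\pi(i)}^{q^{i-1}}\in R$ (by the generalized Schur formula of Lemma \ref{BlockVan}, the ordinary $Nr\times Nr$ determinant of the block matrix $M(A_1,\dots,A_r;r)$ is in turn the ordinary determinant of this circulant matrix, but for the lemma only the $R$-valued determinant is needed). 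Since $\gcd(N,q)=1$ makes $x^N-1$ separable, over $\overline{\mathbb{F}_q}$ it is a product of $N$ distinct linear factors; hence $\bar R:=R\otimes_{\mathbb{F}_q}\overline{\mathbb{F}_q}\cong\overline{\mathbb{F}_q}^{\,N}$ via the ``discrete Fourier'' isomorphism that sends a circulant matrix $A$ to the tuple of its eigenvalues $\bigl(\lambda_0(A),\dots,\lambda_{N-1}(A)\bigr)$, where $\lambda_j(A)$ is the associated polynomial of $A$ evaluated at $\omega^{j}$ for a fixed primitive $N$-th root of unity $\omega\in\overline{\mathbb{F}_q}$. Since $R$ embeds in $\bar R$ and the coordinate homomorphisms $\lambda_0,\dots,\lambda_{N-1}\colon\bar R\to\overline{\mathbb{F}_q}$ jointly separate points, it suffices to check that $\lambda_j$ applied to the left side of \eqref{CMsMooredet1} equals $\lambda_j$ applied to the right side, for each $j$.

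The determinant over a commutative ring is a fixed polynomial in the matrix entries, so the ring homomorphism $\lambda_j$ commutes with $\det_R$ and with the (order-independent, as $R$ is commutative) product on the right of \eqref{CMsMooredet1}; moreover $\lambda_j\bigl(A^{q^{i-1}}\bigr)=\lambda_j(A)^{q^{i-1}}$. Writing $\mu_k:=\lambda_j(A_k)\in\overline{\mathbb{F}_q}$, the image of the left side under $\lambda_j$ is therefore the genuine scalar Moore determinant $\det\bigl((\mu_k^{q^{i-1}})_{1\le i,k\le r}\bigr)=\det\bigl(M(\mu_1,\dots,\mu_r;r)\bigr)$, and the image of the right side is $\prod_{\mathbf{c}}(c_1\mu_1+\dots+c_r\mu_r)$. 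So the lemma is reduced to the scalar identity $\det(M(\mu_1,\dots,\mu_r;r))=\prod_{\mathbf{c}}(c_1\mu_1+\dots+c_r\mu_r)$ for arbitrary $\mu_1,\dots,\mu_r\in\overline{\mathbb{F}_q}$, with $\mathbf{c}$ running over the direction vectors of $\mathbb{F}_q^r$ (those with last nonzero coordinate $1$).

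This scalar identity I would settle in two cases. If $\mu_1,\dots,\mu_r$ are $\mathbb{F}_q$-linearly independent, then they are all nonzero and generate a single finite field $\mathbb{F}_{q^{t_1}}$ with $t_1\ge r$, so \eqref{Mooredet1} applies verbatim. If instead they are $\mathbb{F}_q$-linearly dependent, pick a nontrivial relation and normalize it so that its last nonzero coefficient is $1$: this gives a direction vector $\mathbf{c}$ with $c_1\mu_1+\dots+c_r\mu_r=0$, so the corresponding factor makes the right-hand product vanish; and since the maps $x\mapsto x^{q^{i-1}}$ are $\mathbb{F}_q$-linear, the same relation forces the columns of $M(\mu_1,\dots,\mu_r;r)$ to be $\mathbb{F}_q$-linearly dependent, so its determinant vanishes as well. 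Assembling the equalities $\lambda_j(\text{LHS})=\lambda_j(\text{RHS})$ over $j=0,\dots,N-1$ gives the identity in $\bar R$, hence in $R$, which is \eqref{CMsMooredet1}.

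The point that makes the detour through $\bar R$ (equivalently, through the CRT splitting $R\cong\prod_{k}\mathbb{F}_{q^{d_k}}$ over the irreducible factors of $x^N-1$) the right move — and which I expect to be the only real obstacle — is that one cannot simply transcribe the classical column-reduction proof of \eqref{Mooredet1} into $R$. That argument repeatedly divides the determinant by a linear form $c_1A_1+\dots+c_rA_r$, which in $R$ may be a zero divisor, so the key step ``a polynomial of degree $q^{i-1}$ has at most $q^{i-1}$ roots'' has no counterpart over $R$. Working one field factor at a time removes the zero divisors; and the factors on which $r$ exceeds the residue degree $d_k$, where the projected Moore matrix is singular, are exactly the ones covered by the linearly-dependent case above, so nothing is lost.
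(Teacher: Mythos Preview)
Your argument is correct and takes a genuinely different route from the paper. The paper transcribes the classical Lidl--Niederreiter induction directly into $R_1=\mathbb{F}_q[x]/\langle x^N-1\rangle$: it adjoins a column $(y,y^q,\ldots,y^{q^r})^T$, expands along it to obtain a $q$-polynomial $M(y)=M_r y^{q^r}+\sum_i \alpha_i y^{q^i}$ over $R_1[y]$, observes that every $\mathbb{F}_q$-combination $\sum c_i a_i(x)$ is a root, and then asserts the factorization $M(y)=M_r\prod_{c}(y-\sum c_i a_i)$, from which the inductive step $M_{r+1}=M(a_{r+1})$ follows. You instead diagonalize via the DFT embedding $R\hookrightarrow\overline{\mathbb{F}_q}^{\,N}$ and verify the identity coordinatewise, reducing it to the scalar Moore formula over a genuine field, with the linearly-dependent case handled by showing both sides vanish. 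What this buys you is exactly the point you raise in your final paragraph: the paper's factorization step ``$q^r$ distinct roots, hence $M(y)$ splits as a product of the corresponding linear factors'' is the standard move over an integral domain but is not justified over a ring with zero divisors, where a degree-$d$ polynomial can have more than $d$ roots and root lists do not determine factorizations. Your CRT detour removes the zero divisors and thereby closes that gap cleanly; the only extra ingredient you need is the separability of $x^N-1$, which follows from the paper's standing hypothesis $\gcd(N,q)=1$.
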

	\begin{proof}
		Equation (\ref{CMsMooredet1}) holds if and only if the following equation is satisfied:
		\begin{equation}\label{CMsMooredet2}
			\det(M(a_1(x),a_2(x),\ldots,a_{r}(x);r))=\prod_{\mathbf{c}}(c_1\cdot a_1(x)+\cdots+c_r\cdot a_r(x)),
		\end{equation}
		where \(a_1(x), a_2(x), \ldots, a_r(x)\in\frac{\mathbb{F}_q[x]}{\langle x^N - 1\rangle}\) correspond to \(A_1, A_2, \ldots, A_r\) respectively, and the traversal range of \(\mathbf{c}\) remains unchanged.
		
		The remaining part of the proof employs the principle of mathematical induction on \(r\). The inspiration for this approach is drawn from \cite[Lemma 3.51]{Lidl1}. Based on the definition of the traversal range of \(\mathbf{c}\), Equation (\ref{CMsMooredet2}) is equivalent to
		\begin{equation}\label{CMsMooredet3}
			\det(M(A_1,A_2,\ldots,A_{r};r))=a_1(x)\prod_{i = 1}^{r - 1}\prod_{(c_1,\ldots,c_i)\in\mathbb{F}_q^i}\left(a_{i + 1}(x)-\sum_{i_1 = 1}^i c_{i_1}\cdot a_{i_1}(x)\right).
		\end{equation}
		For the sake of notational convenience, let \(R_1=\frac{\mathbb{F}_q[x]}{\langle x^N - 1\rangle}\), and denote \(\det(M(A_1,\ldots,A_{r};r))\) as \(M_r\). It is straightforward to verify that Equation (\ref{CMsMooredet3}) holds for \(r = 1\).
		Assume that Equation (\ref{CMsMooredet3}) holds for some \(r\geq1\). Consider the polynomial
		\begin{equation*}
			M(y)=\begin{vmatrix}
				a_1(x) & a_2(x) & \cdots & a_{n}(x) & y \\
				a_1^q(x) & a_2^q(x) & \cdots & a_{n}^q(x) & y^q \\
				\vdots& \vdots & \ddots & \vdots & \vdots \\
				a_1^{q^{r - 1}}(x) & a_2^{q^{r - 1}}(x) & \cdots & a_{n}^{q^{r - 1}}(x) & y^{q^{r - 1}} \\
				a_1^{q^{r}}(x) & a_2^{q^{r}}(x) & \cdots & a_{n}^{q^{r}}(x) & y^{q^{r}}
			\end{vmatrix}
		\end{equation*}
		defined over the ring \(R_1\).
		Expanding the determinant \(M(y)\) along its last row, we obtain
		\begin{equation}\label{CMsMooredet4}
			M(y)=M_ry^{q^r}+\sum_{i = 0}^{r - 1}\alpha_i y^{q^i}
		\end{equation}
		where \(\alpha_i\in R_1\). Since the characteristic of the finite field \(\mathbb{F}_q\) is identical to that of \(R_1\), we have
		\begin{equation*}
			\left(\sum_{i_1 = 1}^r c_{i_1}\cdot a_{i_1}(x)\right)^{q^r}=\sum_{i_1 = 1}^r c_{i_1}\cdot a_{i_1}^{q^r}(x).
		\end{equation*}
		Consequently, all linear combinations \(\sum_{i_1 = 1}^r c_{i_1}\cdot a_{i_1}(x)\) with \(c_{i_1}\in\mathbb{F}_q\) for \(i_1\in[r]\) are roots of the polynomial \(M(y)\) over \(R_1\).
		
		If \(a_1(x), a_{2}(x),\ldots,a_{r}(x)\) are linearly independent over \(R_1\), then \(M(y)\) has \(q^r\) distinct roots, and Equation (\ref{CMsMooredet4}) can be factored as
		\begin{equation}\label{CMsMooredet5}
			M(y)=M_r\prod_{(c_1,\ldots,c_r)\in\mathbb{F}_q^r}\left(y-\sum_{i_1 = 1}^r c_{i_1}\cdot a_{i_1}(x)\right).
		\end{equation}
		If \(a_1(x), a_{2}(x),\ldots,a_{r}(x)\) are linearly dependent over \(R_1\), then \(M_r = 0\). Therefore, $$M\cdot\sum_{i_1 = 1}^r c_{i_1}\cdot a_{i_1}(x) = 0,$$ and Equation (\ref{CMsMooredet5}) holds in all cases. Then, we have
		\begin{equation*}
			M_{r + 1}=M(a_{r + 1}(x))=M_{r}\prod_{(c_1,\ldots,c_r)\in\mathbb{F}_q^r}\left(a_{r + 1}(x)-\sum_{i_1 = 1}^r c_{i_1}\cdot a_{i_1}(x)\right).
		\end{equation*}
		By the principle of mathematical induction on \(r\), we arrive at the desired result.
	\end{proof}
	With the aid of Lemma \ref{CMsMoo}, we obtain the following result.
	\begin{theorem}\label{MimiLv1}
		Let $r$ and $m$ be positive integers with $r<m$.
		Let $a_1(x),a_2(x),\ldots,a_{m}(x)\in \frac{\F_q[x]}{\langle x^N-1\rangle}$. Let $A_i$ be the associated circulant matrices of $a_i(x)$ which has row weight of $t$ with some even integer $t$ and $i\in[m]$.  Assume that
		\begin{equation}\label{condition4}
			\gcd\left(a_{i_1}(x)\cdot\prod_{j=1}^{r-1}\prod_{(c_1,\ldots,c_{j})\in\F_q^j}\left(a_{i_{j+1}}(x)-\sum_{k_1=1}^jc_{k_1}a_{i_{k_1}}(x)\right),x^N-1\right)=x-1,
		\end{equation}
		for any $\{i_1,\ldots,i_{r}\}\in {[m]\choose r}$. Then the code with parity-check matrix
		\begin{equation}\label{Punc15}
			H=\left(
			\begin{array}{cccc}
				{\rm Pu}(A_1,1) & {\rm Pu}(A_2,1) & \cdots & {\rm Pu}(A_m,1)  \\
				{\rm Pu}(A_1^2,1) & {\rm Pu}(A_2^2,1) & \cdots & {\rm Pu}(A_m^2,1)  \\
				{\rm Pu}(A_1^4,1) & {\rm Pu}(A_2^4,1) & \cdots & {\rm Pu}(A_m^4,1)  \\
				\vdots & \vdots & \ddots & \vdots \\
				{\rm Pu}(A_1^{2^{r-1}},1) & {\rm Pu}(A_2^{2^{r-1}},1) & \cdots & {\rm Pu}(A_m^{2^{r-1}},1)  \\
			\end{array}
			\right)
		\end{equation}
		is an MDS array code with sub-packetization level $N-1$.
	\end{theorem}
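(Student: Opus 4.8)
The plan is to verify the MDS-array property through the block-submatrix criterion, evaluate the determinant of the \emph{unpunctured} block Moore matrix by means of Lemma~\ref{CMsMoo}, read condition~(\ref{condition4}) as the statement that this determinant (again a circulant) meets $x^{N}-1$ only in the factor $x-1$, and then strip off that single degenerate factor by the puncturing argument behind Lemma~\ref{Lv1}. Concretely, first I would reduce to block submatrices: by the criterion stated just above Lemma~\ref{BlockVan} (equivalently Proposition~\ref{proiff}), the code with parity-check matrix $H$ as in (\ref{LDPCparitycheck}) is an MDS array code with sub-packetization level $N-1$ precisely when, for every $\{i_1,\dots,i_r\}\in\binom{[m]}{r}$, the $r\times r$ block submatrix $M_{(i_1,\dots,i_r)}=\bigl(\mathrm{Pu}(A_{i_l}^{2^{\,j-1}},1)\bigr)_{1\le j,l\le r}$ is invertible over $\F_q$; so I fix such an $r$-subset and argue its invertibility.

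Next I would pass to the unpunctured matrix $\widetilde M=(A_{i_l}^{2^{\,j-1}})_{1\le j,l\le r}$, whose blocks are circulants, hence pairwise commuting CM$(t)$s, so that $\widetilde M$ is a block Moore matrix and Lemma~\ref{CMsMoo} applies: $\det\widetilde M$ is again circulant, and the rearrangement used in the proof of that lemma (cf.\ (\ref{CMsMooredet3})) identifies its associated polynomial with exactly the left-hand argument of the $\gcd$ in (\ref{condition4}). Thus hypothesis (\ref{condition4}) says precisely that this associated polynomial has gcd $x-1$ with $x^{N}-1$ --- and this is the strongest one can ask: since $t$ is even and $\mathrm{char}\,\F_q=2$, each $a_i(1)=t=0$, so $x-1$ divides every block modulo $x-1$ and hence $x-1\mid\gcd(\det\widetilde M,x^{N}-1)$ unconditionally, while $x^{N}-1$ is squarefree (because $\gcd(N,q)=1$), so this gcd cannot be $(x-1)^{2}$ or larger.

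Finally I would invoke the puncturing mechanism of \cite{Lv1} (the proof of Lemma~\ref{Lv1}), now feeding in Lemma~\ref{CMsMoo} in place of the block Vandermonde formula. Decompose $\F_q[x]/\langle x^{N}-1\rangle$ by the CRT along the irreducible factors of $x^{N}-1$. Over the degree-one component $\F_q[x]/\langle x-1\rangle$ every block of $\widetilde M$ vanishes (as above), whereas over each other irreducible factor $f$ condition (\ref{condition4}) forces $\det\widetilde M\not\equiv0\pmod f$, i.e.\ $a_{i_1}(x),\dots,a_{i_r}(x)$ are $\F_q$-linearly independent modulo $f$, so the Moore matrix over the field $\F_q[x]/\langle f\rangle$ is invertible. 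Exactly as in \cite{Lv1}, puncturing the last row and last column of each block once removes precisely the degree-one $x-1$ component and leaves an invertible matrix over the product of the remaining components; hence $M_{(i_1,\dots,i_r)}$ is invertible, and the code is an MDS array code with sub-packetization level $N-1$.

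The hard part will be this last step: rigorously porting the Vandermonde-specific argument of \cite{Lv1} to the Moore form, i.e.\ checking that one puncture per block cancels exactly the $x-1$ component of the block Moore determinant. It should go through because that argument uses only that the blocks commute and the explicit associated polynomial of the block determinant, both of which Lemma~\ref{CMsMoo} supplies; but one still has to handle the bookkeeping that the exponents $1,2,4,\dots,2^{r-1}$ are the Frobenius powers matching Lemma~\ref{CMsMoo} (part of why $\mathrm{char}\,\F_q=2$ is imposed), and that $t$ even is exactly what pins the unique ``bad'' prime to $x-1$, so that the single puncture prescribed in (\ref{Punc15}) suffices.
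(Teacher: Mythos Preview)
Your proposal is correct and follows essentially the same high-level route as the paper: reduce to invertibility of each $r\times r$ block submatrix, use Lemma~\ref{CMsMoo} to identify the associated polynomial of the block Moore determinant with the left argument of (\ref{condition4}), read (\ref{condition4}) as saying this polynomial is a unit modulo every irreducible factor of $x^N-1$ except $x-1$, and then puncture once to kill the $x-1$ component.

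The one place the paper differs from your plan is the execution of the last step. Rather than invoking the mechanism of \cite{Lv1} abstractly, the paper gives a short self-contained argument: first it shows that the left kernel of the \emph{unpunctured} block matrix over $\F_q$ consists only of vectors whose $r$ polynomial components are each $\F_q$-multiples of $\tfrac{x^N-1}{x-1}$ (this is exactly your CRT observation); since any such nonzero multiple has nonzero $x^{N-1}$-coefficient, appending a $0$ in the last coordinate of each block cannot land in this kernel, so the matrix with the last \emph{row} of each block removed already has full row rank $r(N-1)$. Then the even-row-weight hypothesis is used once more: in characteristic~$2$ it forces the columns of each $A_i^{2^{k}}$ to sum to zero, so the last \emph{column} of each block is redundant and may be deleted without loss of rank. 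This two-step ``delete last row, then delete last column'' argument is what replaces your appeal to \cite{Lv1}, and it makes transparent exactly where $t$ even and $\mathrm{char}\,\F_q=2$ enter.
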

	\begin{proof}
		Recall that in this subsection, we assume that the characteristic of $\mathbb{F}_q$ is $2$.
		For an arbitrary subset \(\{i_1,i_2,\ldots,i_r\}\in {[m]\choose r}\), we define the following square matrix:
		\begin{equation*}
			H_{\{A_{i_1},A_{i_2},\ldots,A_{i_r}\}}=\left(
			\begin{array}{cccc}
				A_{i_1} & A_{i_2} & \cdots & A_{i_r}  \\
				A_{i_1}^2 & A_{i_2}^2 & \cdots & A_{i_r}^2  \\
				A_{i_1}^4 & A_{i_2}^4 & \cdots & A_{i_r}^4  \\
				\vdots & \vdots & \ddots & \vdots \\
				A_{i_1}^{2^{r - 1}} & A_{i_2}^{2^{r - 1}} & \cdots & A_{i_r}^{2^{r - 1}}  \\
			\end{array}
			\right).
		\end{equation*}
		It is easy to check that among the vectors \(\mathbf{c}=(c_1(x),\ldots,c_{r}(x))\) in the module \(\left(\frac{\mathbb{F}_q[x]}{\langle x^N - 1\rangle}\right)^r\), the vectors \(\mathbf{c}\) that satisfy the equation \(\mathbf{c}\cdot H_{\{A_{i_1},A_{i_2},\ldots,A_{i_r}\}}=\mathbf{0}\) are limited to either the all-zero vector or the vector \(\mathbf{c}=\left(\underbrace{\frac{x^N-1}{x-1},\frac{x^N-1}{x-1},\ldots,\frac{x^N-1}{x-1}}_r\right)\).
		
		We then assert that for any \(\{i_1,i_2,\ldots,i_r\}\in {[m]\choose r}\), the matrix
		\begin{equation*}
			\overline{H_{\{A_{i_1},A_{i_2},\ldots,A_{i_r}\}}}=\left(
			\begin{array}{cccc}
				\overline{A_{i_1}} & \overline{A_{i_2}} & \cdots & \overline{A_{i_r}}  \\
				\overline{A_{i_1}^2} & \overline{A_{i_2}^2} & \cdots & \overline{A_{i_r}^2}  \\
				\overline{A_{i_1}^4} & \overline{A_{i_2}^4} & \cdots & \overline{A_{i_r}^4}  \\
				\vdots & \vdots & \ddots & \vdots \\
				\overline{A_{i_1}^{2^{r - 1}}} & \overline{A_{i_2}^{2^{r - 1}}} & \cdots & \overline{A_{i_r}^{2^{r - 1}}}  \\
			\end{array}
			\right)
		\end{equation*}
		has a rank of \(r(N - 1)\), where \(\overline{A_{i_j}^{2^{k-1}}}\) represents the truncated matrix of \(A_{i_j}^{2^{k}}\) with the last row removed with $k\in [r]$.
		
		Suppose, for the sake of contradiction, that there exists a non-zero vector \(\mathbf{c}^{\prime}=(c_1^{\prime},\ldots,c_{r}^{\prime})\in \left(\mathbb{F}_q^{N - 1}\right)^r\) such that
		$$\mathbf{c}^{\prime}\cdot \overline{H_{\{A_{i_1},A_{i_2},\ldots,A_{i_r}\}}}=\mathbf{0}\in\left(\mathbb{F}_q^{N - 1}\right)^r.$$
		We define \(\mathbf{c}^{\prime\prime}=((c_1^{\prime},0),\ldots,(c_{r}^{\prime},0))\in \left(\mathbb{F}_q^{N}\right)^r\). Then, we have
		$$\mathbf{c}^{\prime}\cdot \overline{H_{\{A_{i_1},A_{i_2},\ldots,A_{i_r}\}}}=\mathbf{0}\in\left(\mathbb{F}_q^{N - 1}\right)^r\longrightarrow\mathbf{c}^{\prime\prime}\cdot H_{\{A_{i_1},A_{i_2},\ldots,A_{i_r}\}}=\mathbf{0}\in\left(\mathbb{F}_q^{N}\right)^r.$$
		Based on the one-to-one correspondence between \(\mathbb{F}_q^{N}\) and \(\frac{\mathbb{F}_q[x]}{\langle x^N - 1\rangle}\), we conclude that \(\mathbf{c}^{\prime}\) can only be the all-zero vector, which contradicts our initial assumption.
		
		Since the row weight \(t\) of \(A_{i_j}^{2^{k}}\) is even, the last column of \(A_{i_j}^{2^{k}}\) can be linearly expressed in terms of the first \(N - 1\) columns. This indicates that the rank of the square matrix
		$$\left(
		\begin{array}{cccc}
			{\rm Pu}(A_{i_1},1) & {\rm Pu}(A_{i_2},1) & \cdots & {\rm Pu}(A_{i_r},1)  \\
			{\rm Pu}(A_{i_1}^2,1) & {\rm Pu}(A_{i_2}^2,1) & \cdots & {\rm Pu}(A_{i_r}^2,1)  \\
			{\rm Pu}(A_{i_1}^4,1) & {\rm Pu}(A_{i_2}^4,1) & \cdots & {\rm Pu}(A_{i_r}^4,1)  \\
			\vdots & \vdots & \ddots & \vdots \\
			{\rm Pu}(A_{i_1}^{2^{r - 1}},1) & {\rm Pu}(A_{i_2}^{2^{r - 1}},1) & \cdots & {\rm Pu}(A_{i_r}^{2^{r - 1}},1)  \\
		\end{array}
		\right)$$
		is \(r\cdot (N - 1)\) for any \(\{i_1,i_2,\ldots,i_r\}\in {[m]\choose r}\).
	\end{proof}
	Combining Lemma \ref{suff1} and Theorem \ref{MimiLv1}, we can directly obtain the following result.
	\begin{theorem}\label{thmsec}
		Let \(\C\) be a code that satisfies the conditions of Theorem \ref{MimiLv1}, and the notations are consistent with those in Theorem \ref{MimiLv1}. If for any \(\{i_1, i_2\} \in {[m] \choose 2}\) and \(\{i_3, i_4\} \in {[r] \choose 2}\), the sub-matrix $\left(
		\begin{array}{cc}
			A_{i_1}^{2^{i_3-1}} & A_{i_1}^{2^{i_4-1}}\\
			A_{i_2}^{2^{i_3-1}} & A_{i_2}^{2^{i_4-1}}\\
		\end{array}
		\right)$ satisfies all three conditions in Lemma \ref{suff1}, then \(\C\) is a block MDS LDPC code with girth $>4$.
	\end{theorem}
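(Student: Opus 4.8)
The plan is to prove the two halves of the conclusion separately. The ``block MDS'' half needs no new work: by hypothesis $\C$ satisfies the assumptions of Theorem~\ref{MimiLv1}, so the parity-check matrix $H$ of (\ref{Punc15}) already defines an MDS array code over $\F_q$ with sub-packetization level $N-1$. Hence the entire task is to show that the Tanner graph of $H$ has girth $>4$, i.e.\ contains no $4$-cycle.

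For the girth part I would begin with two reductions. First, since ${\rm Pu}(\cdot,1)$ merely deletes the last row and the last column of each circulant power, the Tanner graph of $H$ is, after relabelling, the subgraph induced by the surviving nodes in the Tanner graph of the \emph{unpunctured} block matrix $\bigl(A_j^{2^{k-1}}\bigr)_{k\in[r],\,j\in[m]}$; passing to an induced subgraph cannot decrease the girth, so it suffices to forbid $4$-cycles in the unpunctured matrix. Second, because $\mathrm{char}\,\F_q=2$ the Frobenius map sends $a_j(x)=\sum_l x^{i_{j,l}}$ to $a_j(x)^{2^{k-1}}=\sum_l x^{2^{k-1}i_{j,l}}$, and multiplication by $2^{k-1}$ permutes $\mathbb{Z}_N$ since $N$ is odd; hence every $A_j^{2^{k-1}}$ is again a CM$(t)$ with $t$ even, and Lemmas~\ref{lemm11}, \ref{lemm12} and \ref{suff1} apply verbatim to these blocks.

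Then comes the case analysis, which is where essentially all of the content lies. A $4$-cycle in the bipartite Tanner graph meets exactly two variable nodes and two check nodes; its two variable nodes lie in at most two block-columns and its two check nodes in at most two block-rows, so the four incident entries sit inside a $2\times2$ sub-grid of blocks $A_{i_1}^{2^{i_3-1}},A_{i_1}^{2^{i_4-1}},A_{i_2}^{2^{i_3-1}},A_{i_2}^{2^{i_4-1}}$ for some $\{i_1,i_2\}\in\binom{[m]}{2}$ and $\{i_3,i_4\}\in\binom{[r]}{2}$ (one assumes $r\ge2$, the case $r=1$ being degenerate). If the cycle stays inside a single block $A_{i}^{2^{k-1}}$, then condition (i) of Lemma~\ref{suff1} --- which, as $\{i_1,i_2\}$ and $\{i_3,i_4\}$ range over all pairs, forces every $\Delta({\rm Index}(a_{i}^{2^{k-1}}))$ to be a set --- together with Lemma~\ref{lemm11} excludes it. If the cycle lives in two blocks sharing a block-row (a horizontal strip) or sharing a block-column (a vertical strip), it is excluded by the corresponding emptiness in condition (ii) of Lemma~\ref{suff1} combined with Lemma~\ref{lemm12} (applied to that strip, or to its transpose, which does not affect the girth). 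Finally, if the cycle genuinely spreads over all four blocks, it is precisely the configuration analyzed in Lemma~\ref{suff1}, and its three conditions are exactly the hypotheses imposed on $\{i_1,i_2\}$ and $\{i_3,i_4\}$ (the $2\times2$ block matrix written in the statement is the transpose of the one sitting inside $H$, which is immaterial for cycles), so it does not occur. Since every possible $4$-cycle falls into one of these three cases and each is impossible, the unpunctured matrix --- hence $H$ --- has girth $>4$; together with the first paragraph this shows $\C$ is a block MDS LDPC code with girth $>4$.

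The step I expect to be the real obstacle is not any individual lemma application but the bookkeeping that ties them together: establishing that every $4$-cycle of $H$ is genuinely one of the three local patterns above (no fourth type hiding), that puncturing cannot manufacture a new short cycle, and that after the harmless transpositions the $2\times2$ block matrix and the strips named in Theorem~\ref{thmsec} and in Lemmas~\ref{lemm11}, \ref{lemm12}, \ref{suff1} are literally the same objects. Once that matching is nailed down, no further computation remains.
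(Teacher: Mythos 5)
Your proposal is correct and takes essentially the same route as the paper, which proves Theorem \ref{thmsec} simply by declaring it a direct combination of Theorem \ref{MimiLv1} (for the block MDS part) and Lemma \ref{suff1} together with Lemmas \ref{lemm11} and \ref{lemm12} (for the absence of $4$-cycles). Your write-up merely makes explicit the bookkeeping the paper leaves implicit --- the induced-subgraph argument for puncturing, the fact that each $A_j^{2^{k-1}}$ is again a CM$(t)$ because Frobenius permutes exponents modulo the odd $N$, and the exhaustive one-block/two-block/four-block case split --- all of which is sound.
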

	If CM\((3)\) appears in the parity-check matrix \(H\), then its girth will definitely not exceed $6$ \cite[Theorem 18]{Smar2}. If CM with column weight greater than $3$ appears in the parity-check matrix \(H\), the occurrence of short cycles becomes even more difficult to avoid. Therefore, our main interest lies in the case of CM\((2)\).
	
	We consider the case of high code rate where \(q = 2\) and \(r = 2\). Xiao {\it et al.} \cite{Xiao1} also similarly considered the case with only two rows of blocks. We will subsequently compare our work with that of \cite{Xiao1}. Let \(N\) be an odd prime number such that \(2\) is a primitive root modulo \(N\) and \(N>3\). Let \(a_1(x),a_2(x),\ldots,a_m(x)\in\frac{\mathbb{F}_2[x]}{\langle x^N - 1\rangle}\) be binomials. The restriction (18) appearing in Theorem \ref{MimiLv1} will be simplified to
	$$\gcd\left(a_{i_1}(x)a_{i_2}(x)(a_{i_2}(x)-a_{i_1}(x)),x^N - 1\right)=x - 1$$
	for any \(\{i_1,i_2\}\in {[m] \choose 2}.\)
	Since \(2\) is a primitive root modulo \(N\), the polynomial \(x^{N - 1}+\cdots +x + 1\) is irreducible over \(\mathbb{F}_2\). Then, in the step of constructing the MDS array code, there is no restriction on the selection of \(a_1(x),a_2(x),\ldots,a_m(x)\) as long as they are distinct from each other.
	
	Next, we need to make \(\text{Index}(a_i(x))\) satisfy the three restrictive conditions described in Lemma \ref{suff1}. Since \(N\) is an odd prime integer, the first restrictive condition is obviously satisfied. For the second restrictive condition, since \(3\) is not a factor of \(N\), the restrictive condition in the vertical direction is obviously satisfied. The restrictive condition in the horizontal direction is transformed into \(\Delta (\text{Index}(a_{i_1}(x)))\cap \Delta (\text{Index}(a_{i_2}(x)))=\varnothing\) for any \(\{i_1,i_2\}\in {[m]\choose 2}\). The third restrictive condition can be simplified to $$0\notin(\text{Index}(a_{i_1}(x)) + 2\cdot\text{Index}(a_{i_2}(x)))- (\text{Index}(a_{i_2}(x)) + 2\cdot\text{Index}(a_{i_1}(x)))$$
	for any \(\{i_1,i_2\}\in {[m] \choose 2}.\)
	\begin{example}\label{ex17}
		Assume that $q=2$, $r=2$, and $N=239$. We select the following polynomials:
\begin{align*}
	a_1(x) & = x^{230} + x^{29}, &
	a_2(x) & = x^{167} + x^{36}, &
	a_3(x) & = x^{179} + x^{80}, \\
	a_4(x) & = x^{61} + x^{3}, &
	a_5(x) & = x^{121} + x^{25}, &
	a_6(x) & = x^{128} + x^{66}, \\
	a_7(x) & = x^{158} + x^{85}, &
	a_8(x) & = x^{143} + x^{38}, &
	a_9(x) & = x^{175} + x^{46}, \\
	a_{10}(x) & = x^{101} + x^{92}, &
	a_{11}(x) & = x^{159} + x^{95}, &
	a_{12}(x) & = x^{127} + x^{53}.
\end{align*}

It is easy to verify that the above polynomials satisfy the constraints in Theorem \ref{thmsec}. Therefore, we can construct the MDS array code $\C_{10}$ with sub-packetization level $N-1 = 238$. Regarding $\C_{10}$ as a linear code over $\mathbb{F}_2$, it can be used as a binary LDPC code with parameters $(2856, 2380)$, rate of $0.8333$ and girth $6$.

Xiao {\it et al.}  provided the simulation data of $\C_{2,12,239}^{RS}$ and $\C_{2,12,239}^{Gabidulin}$ in \cite{Xiao1}. Both codes have parameters $(2868, 2391)$ and rate of $0.8337$. From the simulation results, the LDPC code constructed by us with the same code rate and similar parameters has better performance. The girth of the code $\C_{2,12,239}^{RS}$ is $8$.

Chen {\it et al.} \cite{Chen11} proved that the girth is $12$ if and only if the parity-check matrix mentioned in \cite{Xiao1} has the modular Golomb ruler property.
In fact, to ensure the girth is $12$, for the LDPC code $\C_{2,12,239}^{Gabidulin}$ obtained by the Gabidulin-type, the sub-packetization level should be at least $239$ so that the found Markers can make the corresponding parity-check matrix have the modular Golomb ruler property. In other words, the girth of the code $\C_{2,12,239}^{Gabidulin}$ is $12$.

It is obvious that $N=239$ is not the minimum sub-packetization level for keeping the cardinality of $N^{\prime}$-modular Golomb ruler markers as $12$. We also take into account the case where the cardinality of the optimal $N^{\prime}$-modular Golomb ruler markers is $12$. As can be seen from \cite[Table I]{Xiao1}, if we want to control the number of block rows as $12$, the minimum sub-packetization level for each CM is \(N_{\text{optimal}} = 133\). At this time, $$GR(12, 133) = \{0, 1, 3, 12, 20, 38, 34, 81, 94, 88, 104, 109\}$$ is an optimal $133$-modular Golomb ruler. We use the construction method in the literature \cite{Xiao1} and take $GR(12, 133)$ as the selection set of CPMs, and the obtained LDPC code is denoted as $\C_{2,12,133}^{Xiao133}$, with parameters $(1596, 1331)$ and a code rate of $0.8334$. At this time, the girth of $\C_{2,12,133}^{Xiao133}$ is $12$.

We select the following polynomials:
\begin{align*}
	a_1(x) & = x^{33} + x^{35}, &
	a_2(x) & = x^{89} + x^{111}, &
	a_3(x) & = x^{60} + x^{86}, \\
	a_4(x) & = x^{46} + x^{116}, &
	a_5(x) & = x^{9} + x^{14}, &
	a_6(x) & = x^{10} + x^{39}, \\
	a_7(x) & = x^{13} + x^{115}, &
	a_8(x) & = x^{27} + x^{124}, &
	a_9(x) & = x^{52} + x^{59}, \\
	a_{10}(x) & = x^{30} + x^{131}, &
	a_{11}(x) & = x^{38} + x^{54}, &
	a_{12}(x) & = x^{98} + x^{104}.
\end{align*}
It is easy to verify that the above polynomials satisfy the constraints in Theorem \ref{thmsec}. Therefore, we can construct an MDS array code $\C_{11}$ with a sub-packetization level of \(N_{\text{optimal}} - 1 = 133 - 1 = 132\). This code can be regarded as an LDPC code with parameters $(1584, 1320)$ and a code rate of $0.833$. The girth of the constructed LDPC code is $6$, which is verified by Python. It can be seen from Figure \ref{Fig4} that the performance of the codes we constructed are better, even though the girth of these codes is only $6$.
		\begin{figure}[H]
			\centering
			\includegraphics[width=1.0\textwidth]{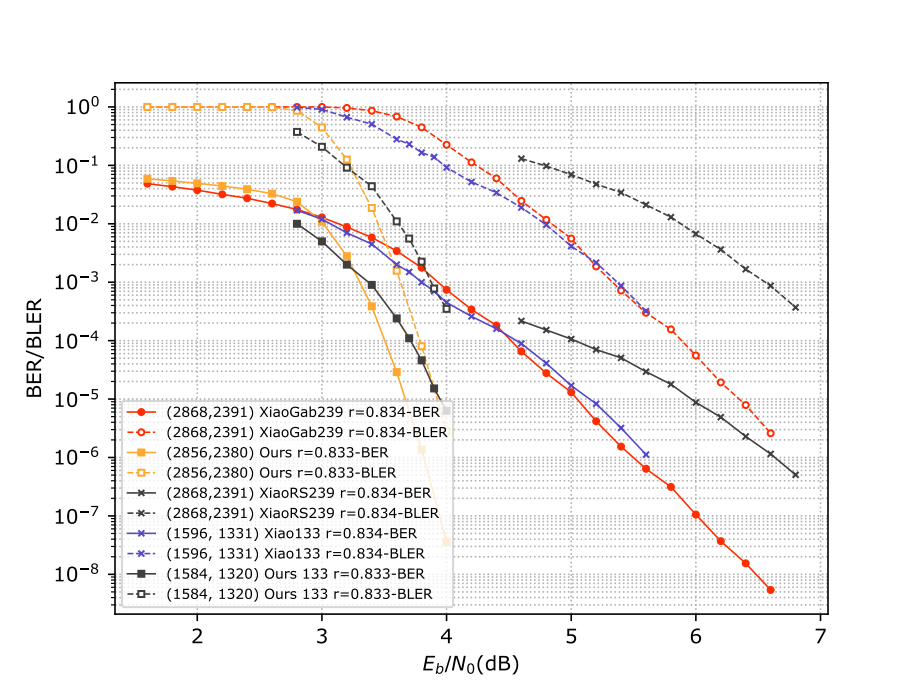}
			\caption{The BER and BLER performances of five LDPC codes}
			\label{Fig4}
		\end{figure}
	\end{example}
	\begin{example}\label{ex18}
		Assume that \( q = 2 \), \( r = 2 \), and \( N = 1453 \). We select the following polynomials:
\begin{align*}
a_1(x) & = x^{1071} + x^{726}, &
a_2(x) & = x^{1253} + x^{761}, &
a_3(x) & = x^{1322} + x^{63}, \\
a_4(x) & = x^{1322} + x^{63}, &
a_5(x) & = x^{1313} + x^{461}, &
a_6(x) & = x^{847} + x^{468}, \\
a_7(x) & = x^{886} + x^{568}, &
a_8(x) & = x^{630} + x^{247}, &
a_9(x) & = x^{1121} + x^{1060}, \\
a_{10}(x) & = x^{1241} + x^{56}, &
a_{11}(x) & = x^{814} + x^{797}, &
a_{12}(x) & = x^{557} + x^{553}, \\
a_{13}(x) & = x^{1397} + x^{1239}, &
a_{14}(x) & = x^{871} + x^{812}, &
a_{15}(x) & = x^{1151} + x^{810}, \\
a_{16}(x) & = x^{1138} + x^{1074}, &
a_{17}(x) & = x^{1319} + x^{383}, &
a_{18}(x) & = x^{832} + x^{677}, \\
a_{19}(x) & = x^{927} + x^{198}, &
a_{20}(x) & = x^{449} + x^{42}, &
a_{21}(x) & = x^{1424} + x^{1049}, \\
a_{22}(x) & = x^{855} + x^{86}, &
a_{23}(x) & = x^{437} + x^{375}, &
a_{24}(x) & = x^{1326} + x^{106}, \\
a_{25}(x) & = x^{1158} + x^{983}, &
a_{26}(x) & = x^{1430} + x^{1380}, &
a_{27}(x) & = x^{639} + x^{166}, \\
a_{27}(x) & = x^{625} + x^{43}, &
a_{28}(x) & = x^{1358} + x^{428}, &
a_{29}(x) & = x^{518} + x^{212}, \\
a_{30}(x) & = x^{876} + x^{762}, &
a_{31}(x) & = x^{791} + x^{349}.
\end{align*}
It is easy to verify that the above polynomials satisfy the constraints in Theorem \ref{thmsec}. Therefore, we can construct the MDS array code \( \C_{12} \) with the sub-packetization level \( N - 1 = 1452 \). Regarding \( \C_{12} \) as a linear code over \( \mathbb{F}_2 \), it can be used as a binary LDPC code with parameters \( (45012, 42108) \) and a code rate of \( 0.935 \).

We select the LDPC code \( B_{\text{PaG},Q}(4, 61) \) given in Example 1 of \cite{Li1}, which has parameters \( (44713, 41784) \) and a code rate of \( 0.935 \). As shown in Figure 5, the LDPC code constructed in this paper outperforms \( B_{\text{PaG},Q}(4, 61) \) when the SNR exceeds \( 4.6 \, \text{dB} \).
		\begin{figure}[H]
			\centering
			\includegraphics[width=0.74\textwidth]{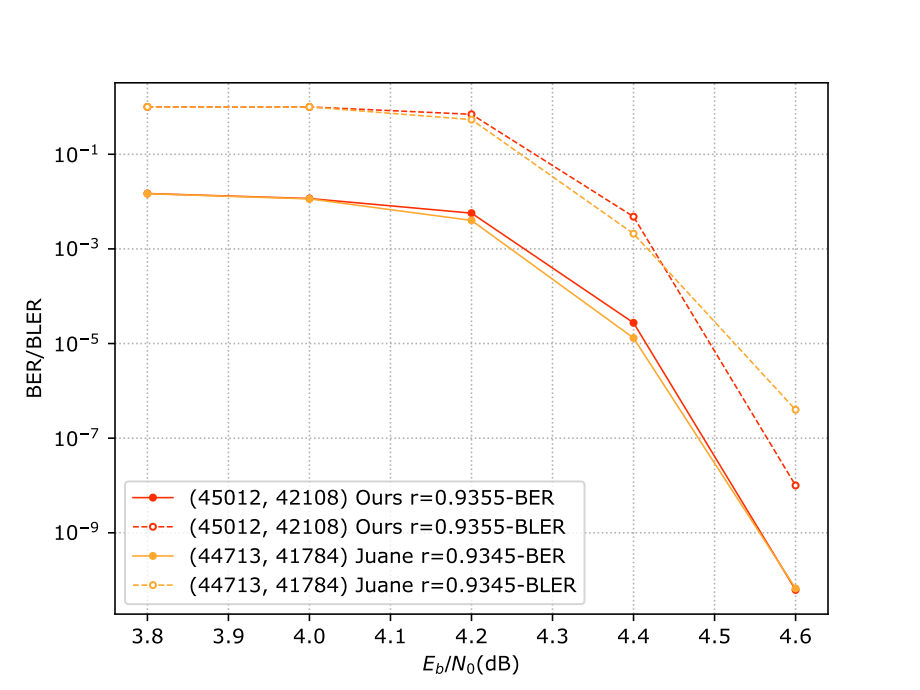}
			\caption{The BER and BLER performances of $\C_{11}$ and $B_{\text{PaG},Q}(4, 61)$.}
			\label{The BER and BLER performances of}
		\end{figure}
	\end{example}
	
	\begin{example}\label{ex19}
Assume that \( q = 2 \), \( r = 2 \), and \( N = 163 \). We select the following polynomials:
\begin{align*}
a_1(x) & = x^{100} + x^{126}, &
a_2(x) & = x^{70} + x^{128}, &
a_3(x) & = x^{61} + x^{97}, \\
a_4(x) & = x^{49} + x^{50}, &
a_5(x) & = x^{34} + x^{137}, &
a_6(x) & = x^{60} + x^{132}, \\
a_7(x) & = x^{101} + x^{143}, &
a_8(x) & = x^{42} + x^{103}, &
a_9(x) & = x^{37} + x^{76}, \\
a_{10}(x) & = x^{134} + x^{159}, &
a_{11}(x) & = x^{85} + x^{88}, &
a_{12}(x) & = x^{24} + x^{154}.
\end{align*}
It is easy to verify that the above polynomials satisfy the constraints in Theorem \ref{thmsec}. Therefore, we can construct the MDS array code \( \C_{12} \) with the sub-packetization level \( N - 1 = 162 \). Regarding \( \C_{12} \) as a linear code over \( \mathbb{F}_2 \), it can be used as a binary LDPC code with parameters \( (1944, 1620) \) and a code rate of \( 0.83 \).

We select the QC-LDPC code 11n-D2-1944b-R56 from IEEE 802.11n \cite{IEEE} for comparison. The code 11n-D2-1944b-R56 has parameters \( (1944, 1620) \) and a code rate of \( 0.83 \). As shown in Figure \ref{Fig7}, the LDPC code constructed in this paper outperforms 11n-D2-1944b-R56 when the SNR exceeds \( 4.0 \, \text{dB} \).
		\begin{figure}[H]
			\centering
			\includegraphics[width=0.84\textwidth]{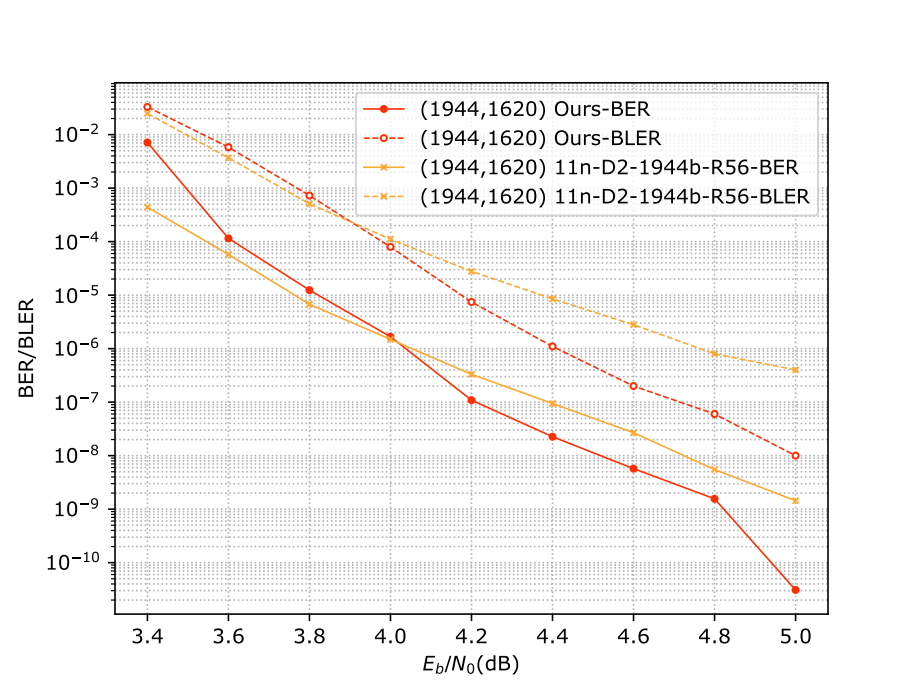}
			\caption{The BER and BLER performances of $\C_{12}$ and 11n-D2-1944b-R56}
			\label{Fig7}
		\end{figure}
	\end{example}
	
	\section{Conclusion and future work}
In this paper, explicit constructions for three classes of block MDS LDPC codes are presented based on several types of punctured circulant matrices. The three constructions exhibit the following characteristics:
\begin{itemize}
\item [{\rm (1)}] The first construction is based on the PUCPM. It features a wide range of applicability without restrictions on the finite field, allowing the code length to reach the same level as the sub-packetization
level. The overall performance of the code can be further enhanced by algebraically selecting matrices to improve the girth. Although the algebraic structure of punctured quasi-cyclic codes is partially damaged, efficient encoding and decoding algorithms have been proposed in \cite{Lv1,Fang1}, somewhat diminishing this advantage.

\item [{\rm (2)}] The second construction relies on the CSM. Its applicability is limited (only applicable to certain non-binary fields), yet it can also achieve a code length equivalent to the sub-packetization
level. The advantage lies in retaining the full algebraic structure of the quasi-cyclic code without the need for puncturing.

\item [{\rm (3)}] The third construction is based on the PUCM$(t)$, where $t>1$ and PUCM$(t)$ requires a Moore-type structure. LDPC codes constructed under this construction outperform many current LDPC codes, including those in the IEEE standards.
\end{itemize}
All three classes of constructions can be applied to optimize quantum key distribution systems. Block MDS LDPC codes are particularly suited for channels with mixed error types (random errors + burst errors), where appropriate decoding schemes can be employed based on the error type.\\

The future works of this topic will focus on the following points:
\begin{itemize}
  \item [{\rm (1)}] Bidirectional Moore-type matrices are applied to construct low-rate (rate $<0.5$) LDPC codes and convolutional codes \cite{Tanner2}. These codes also possesses excellent algebraic structures and and easily determinable girth.
 By applying similar puncturing operations and controlling the corresponding greatest common divisor constraints, it is expected to construct new block MDS LDPC codes and further explore the related constructions of MDS convolutional codes.
  \item [{\rm (2)}]LDPC codes based on Vandermonde-type and Moore-type matrices in this paper do not contain all-zero blocks. Masking operations are commonly used to optimize LDPC codes. However, blindly ``puncturing'' block MDS LDPC codes will most likely result in the loss of the block MDS property in the newly generated LDPC codes. In the future, we may attempt to develop an improved masking scheme that preserves the block MDS property by combining guidance from generalized RS array codes with density evolution algorithms.
\end{itemize}
	
	\section*{Acknowledgement}
	This research is supported by the China Postdoctoral Science Foundation(2024M751606), the National Natural Science Foundation of China (62171248,62401144), the Pengcheng National Laboratory Key Project (PCL2021A07), the Key Area Research and Development Program of Guangdong Province (2020B0101110003), and the Guangdong Basic and Applied Basic Research Foundation under grant (2021A1515110066).

\end{document}